\def\OPTIONArxiv{1}
\def\OPTIONAppendix{1}
\def\OPTIONConf{1}
    \newcommand{\keywordfontsize}{9pt}  %
   \newcommand{\keywordfontsize}{9.7pt}  %
\let\MathRightArrow\Rightarrow %
\def\Rightarrow{\MathRightArrow}
\newtheorem{theorem}{Theorem}
\newtheorem{lemma}[theorem]{Lemma}
\newtheorem{corollary}[theorem]{Corollary}
\newtheorem*{corollary*}{Corollary}
\newtheorem*{conjecture*}{Conjecture}
\newtheorem{exercise}[theorem]{Exercise}
\newtheorem{example}[theorem]{Example}
\newtheorem*{example*}{Example}
\newtheorem*{theorem*}{Theorem}
\theoremstyle{remark} 
\newtheorem*{remark*}{Remark}
\theoremstyle{definition} \newtheorem{definition}[theorem]{Definition}
\newcommand{\ditto}{\ensuremath{''}}
\newcommand{\xLam}{\lambda}
\newcommand{\Lam}[1]{\xLam#1.\,}
\newcommand{\Pair}[2]{\texttt{(}#1\texttt{,}\;#2\texttt{)}}
\newcommand{\xFix}{\ensuremath{\keyword{fix}}}
\newcommand{\Fix}[1]{\xFix~#1.\:}
\newcommand{\sect}{\cap}
\newcommand{\arr}{\rightarrow}
\newcommand{\entails}{\,\vdash\,}         %
\newcommand{\subtype}{\mathrel{\leq}}       %
\newcommand{\sectty}{\mathrel{\mathcolor{dBlue}{\land}}}          %
\newcommand{\unty}{\mathrel{\mathcolor{dRed}{\lor}}}          %
\newcommand{\topty}{\top}
\newcommand{\step}{\mapsto}
\newcommand{\stepe}{\rightsquigarrow}
\newcommand{\stepse}{\stepe^*}
\newcommand{\steptm}{\step}
\newcommand{\stepstm}{\steptm^*}
\newcommand{\steptms}{\stepstm}
\newcommand{\xCase}{\ensuremath{\keyword{case}}}
\newcommand{\Case}[2]{{\xCase~#1~\keyword{of}~#2}}
\newcommand{\Casesumx}[3]{\Case{#1}{\Match{\Inl{#2}}{#3}}}
\newcommand{\Casesumy}[2]{\matchor \Match{\Inr{#1}}{#2}}
\newcommand{\Casesum}[5]{\Case{#1}{\Match{\Inl{#2}}{#3} \Casesumy{#4}{#5}}}
\newcommand{\matchor}{\ensuremath{\normalfont\,\texttt{|}\hspace{-5.35pt}\texttt{|}\,}}
\newcommand{\Match}[2]{{#1}\Rightarrow{#2}}
\newcommand{\Let}[3]{{\mathbf{let}\:#1\,{\texttt{=}}\,#2\:\mathbf{in}\:#3}}
\newcommand{\tyname}[1]{\ensuremath{\mathsf{#1}}}
\newcommand{\keyword}[1]{\text{\usefont{T1}{\rmdefault}{b}{n}\fontsize{\keywordfontsize}{12pt}\selectfont#1}}
\newcommand{\textkw}[1]{\keyword{#1}}
\newcommand{\Bits}{\tyname{bits}}
\newcommand{\Nat}{\tyname{nat}}
\newcommand{\Pos}{\tyname{pos}}
\newcommand{\Int}{\tyname{int}}
\newcommand{\Real}{\tyname{real}}
\newcommand{\String}{\tyname{string}}
\newcommand{\Even}{\tyname{even}}
\newcommand{\Odd}{\tyname{odd}}
\newcommand{\Deref}[1]{\ensuremath{{\texttt{!}}{#1}}}
\newcommand{\Gets}{\ensuremath{\mathrel{\texttt{:=}}}}
\newcommand{\arrayenvr}[1]{\renewcommand{\arraystretch}{1} \begin{array}[t]{@{}r@{}}#1\end{array}}
\newcommand{\arrayenvbl}[1]{\renewcommand{\arraystretch}{1}  \begin{array}[b]{@{}l@{}}#1\end{array}}
\newcommand{\tabularenvl}[1]{\renewcommand{\arraystretch}{1} \begin{tabular}[t]{@{}l@{}}#1\end{tabular}}
\newcommand{\Unit}{\tyname{unit}}
\newcommand{\unit}{\texttt{()}}
\newcommand{\against}{\Leftarrow}
\newcommand{\has}{\Rightarrow}
\newcommand{\xSnd}{\keyword{proj$_2$}}
\newcommand{\xProj}[1]{\keyword{proj$_{#1}$}}
\newcommand{\Proj}[2]{\xProj{#1}\;#2}
\newcommand{\Fst}[1]{\Proj{1}{#1}}
\newcommand{\Snd}[1]{\Proj{2}{#1}}
\newcommand{\Inj}[2]{\keyword{inj$_{#1}$}\;#2}
\newcommand{\Inl}[1]{\Inj{1}{#1}}
\newcommand{\Inr}[1]{\Inj{2}{#1}}
\newcommand{\rulesep}{~~~~~~~}
\newcommand{\Dee}{\mathcal{D}}
\newtheorem{theorem}{Theorem}
\newtheorem{lemma}[theorem]{Lemma}
\newcounter{codeLineCntr}
\newcommand{\out}[1]{}
\renewcommand{\phi}{\varphi}
\newcommand{\Figureref}[1]{Figure \ref{#1}}
\newcommand{\Sectionref}[1]{Section \ref{#1}}
\newcommand{\Secref}[1]{Sec.\ \ref{#1}}
\newcommand{\Theoremref}[1]{Theorem \ref{#1}}
\newcommand{\Thmref}[1]{Thm.\ \ref{#1}}
\newcommand{\Lemmaref}[1]{Lemma \ref{#1}}
\newcommand{\val}{~\mathsf{value}}
\newcommand{\Val}{\textkw{val}}
\newcommand{\hole}{\ensuremath{[\,]}}
\newcommand{\Refexp}[1]{\textbf{ref}~{#1}}
\newcommand{\Refty}[1]{{#1}~\textsf{ref}}
\newcommand{\Lbrack}{\char"5B}
\newcommand{\Rbrack}{\char"5D}
\newcommand{\Lbrace}{\char"7B}
\newcommand{\Rbrace}{\char"7D}
\definecolor{lred}{rgb}{1.0, 0.3, 0.3}
\newcommand{\BLACKNODE}[1]{~~\raise4pt\hbox{\psellipse[fillstyle=solid,fillcolor=black](0, 0)(6pt,6pt)}\hspace{-3pt}{\textcolor{white}{\textbf{#1}}}~\:}
\newcommand{\REDNODE}[1]{~~\raise4pt\hbox{\psellipse[fillstyle=solid,fillcolor=lred](0, 0)(6pt,6pt)}\hspace{-3pt}{\textbf{#1}}~\:}
\newlength\zzskipwidthlen
\newcommand{\bnfas}{\mathrel{::=}}
\newcommand{\bnfalt}{\mathrel{\mid}}
\def\OPTIONLoudLabels{0}
\def\OPTIONLoudLabels{1}
\newcommand{\textt}[1]{\texttt{1}}
\newdimen{\zzzpbox}
\newdimen\zzfontsz
\newcommand{\fontsz}[2]{\zzfontsz=#1%
{\fontsize{\zzfontsz}{1.2\zzfontsz}\selectfont{#2}}}
\newlength{\zzsplatboldwidth}
\newcommand{\xsplatbold}[2]{\settowidth{\zzsplatboldwidth}{{#2}}{#2}\addtolength{\zzsplatboldwidth}{-#1}\hspace{-\zzsplatboldwidth}\raisebox{#1}{{#2}}}
\newcommand{\splatbold}[1]{\xsplatbold{-0.04mm}{#1}}
\def\url@MGstyle{%
\def\UrlFont{\tiny\ttfamily}%
\def\do@url@hyp{\do\-}%
\Url@do
}
\newcommand{\Label}[1]{\LoudLabel{#1}}%
\newcommand{\FLabel}[1]{\label{#1}%
{\tt\scriptsize{#1}}}%
\newcommand{\Label}[1]{\label{#1}}%
\newcommand{\FLabel}[1]{\label{#1}}%
\newcommand{\derives}{\ensuremath{\mathrel{::}}}
\newcommand{\Infer}[3]{\ensuremath{\inferrule*[right={\text{\strut#1}}]{{}#2\mathstrut}{{}#3\mathstrut}}}
\newcommand{\InferAnon}[2]{\infer{#2}{#1\mathstrut}}
\newcommand{\ProofCaseRule}[1]{\item \textbf{Case }\textrm{#1}: ~ }
\newcommand{\DeeBox}[3]{\WhatBox{$\Dee \derives$}{#1}{#2}{#3}}
\newcommand{\WhatBox}[4]{%
          \!\!\!\!\!\!\begin{array}[t]{l} ~\\[-5pt]
                         \text{\mbox{#1}} ~~
                         \InferAnon{#3}{#4} ~\\[4pt] \end{array}}
\newcommand{\DeeProofCaseRule}[3]{\ProofCaseRule{#1}{\DeeBox{#1}{#2}{#3}}}
\newcommand{\BeginProof}{\renewcommand{\arraystretch}{1.1} \begin{tabular}[b]{r@{}r @{} l  l}}
\newcommand{\EndProof}{\end{tabular} \renewcommand{\arraystretch}{\mydefaultarraystretch}}
\newcommand{\Pf}[4] {&$#1$ $#2$\, & $#3$ & #4 \\}
\newcommand{\ePf}[3] {\Pf{#1}{\entails\,}{#2}{#3}}
\newcommand{\proofsep}{\,\\[-0.5em]}
\newenvironment{llproof}{\BeginProof}{\EndProof}
\newcommand{\decolumnizePf}{\end{llproof} ~\\ \begin{llproof}}
\newcommand{\proofheading}[1]{}  %
\newcommand{\textgraybox}[1]{\psframebox[framesep=0pt,fillcolor=grayboxgray,linewidth=0.5pt]{\fbox{\parbox[s][\totalheight]{0mm}{}#1}}}
  \newcommand{\judgboxfontsize}[1]{\large #1}
  \newcommand{\judgboxfontsize}[1]{\fontsz{10pt}{#1}}
\newcommand{\judgbox}[2]{%
      {\raggedright \textgraybox{\judgboxfontsize{\ensuremath{#1}}}\!\begin{tabular}[c]{l} #2 \end{tabular} %
      \ifnum\OPTIONConf=1 \\[0pt] \else \\[6pt] \fi%
}}
\newcommand{\judgboxtwelf}[3]{%
      {\raggedright \textgraybox{\judgboxfontsize{\ensuremath{#1}}}\!\begin{tabular}[c]{l} #2 \end{tabular} %
      \!\!\textgraybox{\fontsz{8pt}{#3}}%
      \ifnum\OPTIONConf=1 \\[0pt] \else \\[6pt] \fi%
}}
\newcommand{\mathcolor}[2]{\textcolor{#1}{\ensuremath{#2}}}
\newcommand{\E}{\mathcal{E}}
\newcommand{\logimp}{\mathrel{\supset}}          %
\newcommand{\Rarrelim}{\ensuremath{{\arr}\text{E}}\xspace}
\newcommand{\Rapp}{\Rarrelim}
\newcommand{\Rarrintro}{\ensuremath{{\arr}\text{I}}\xspace}
\newcommand{\Rlam}{\Rarrintro}
\newcommand{\Rfix}{\ensuremath{\mathsf{fix}}\xspace}
\newcommand{\Rsectintro}{\ensuremath{\land \text{I}}\xspace}
\newcommand{\Rtopintro}{\ensuremath{\topty \text{I}}\xspace}
\newcommand{\Rsectelim}[1]{\ensuremath{\land \text{E}_{#1}}\xspace}
\newcommand{\Runintro}[1]{\ensuremath{\lor \text{I}_{#1}}\xspace}
\newcommand{\Runelim}{\ensuremath{\lor \text{E}}\xspace}
\newcommand{\Rdirect}{\ensuremath{\mathsf{direct}}\xspace}
\newcommand{\Rvar}{\ensuremath{\mathsf{var}}\xspace}
\newcommand{\Rsub}{\ensuremath{\mathsf{sub}}\xspace}
\newcommand{\Rprodelim}[1]{\ensuremath{{*}\text{E}_{#1}}\xspace}
\newcommand{\Rfst}{\Rprodelim{1}}
\newcommand{\Rprodintro}{\ensuremath{{*}I}\xspace}
\newcommand{\MKSUB}[1]{\ensuremath{{#1}{\subtype}}\xspace}
\newcommand{\subArr}{\MKSUB{\arr}}
\newcommand{\subSectL}[1]{\MKSUB{{\sectty}\text{L}\ensuremath{_{#1}}}}
\newcommand{\xsubSectR}{\MKSUB{{\sectty}\text{R}}}
\newcommand{\subSectR}{\xsubSectR}
\newcommand{\xsubUnionL}{\MKSUB{{\unty}\text{L}}}
\newcommand{\subUnionL}{\xsubUnionL}
\newcommand{\subUnionR}[1]{\MKSUB{{\unty}\text{R}\ensuremath{_{#1}}}}
\newcommand{\subTopR}{\MKSUB{{\topty}\text{R}}}
\newcommand{\cmtbegin}{\texttt{(*}}
\newcommand{\cmtend}{\texttt{*)}}
\newcommand{\annobegin}{\cmtbegin\texttt{\Lbrack}~\,}
\newcommand{\annoend}{\,~\texttt{\Rbrack}\cmtend}
\newcommand{\mydefaultarraystretch}{1.2}
\newcommand{\Neg}{\tyname{neg}}
\newenvironment{displ}{\vspace{1pt} \begin{center} ~\!\!}{\end{center}}
\newenvironment{mathdispl}{\ifnum\OPTIONConf=1\vspace{-12pt}\else\vspace{-15pt}\fi \begin{center}\begin{mathpar} ~\!\!}{\end{mathpar}\end{center}}
\def\quofile{15}
\newcommand{\startMquotes}{\immediate\openout \quofile=\jobname.quo}
\newcommand{\MquoteM}[4]{{\small \label{#3} \begin{quotation} #1 \vspace{-0.5em} \flushright{---#2} \end{quotation}} { \immediate\write\quofile{\expandafter\csname Mquoteentry\endcsname{}{#2}{#3}{\expandafter\csname #4\endcsname}}}}
\newcommand{\Mquote}[4]{{\small \label{#3} \begin{quotation} #1 \vspace{-0.5em} \flushright{---#2} \end{quotation}} { \immediate\write\quofile{\expandafter\csname Mquoteentry\endcsname{}{#2}{#3}{#4}}}}
\newcommand{\marginnoteRmlExamples}[1]{{}}
\newcounter{zzInOinkComment}
\newcommand{\oinkkw}[1]{\ifnum\value{zzInOinkComment}=0{\usefont{T1}{cmtt}{m}{it}{\splatbold{#1}}}\else{\normalfont\textsl{#1}}\fi}
\newcommand{\lladdconj}{\mathrel{\binampersand}}
\newcommand{\RZZaddconjR}[1]{\ensuremath{{\lladdconj}\text{R}}\xspace}
\newdimen\zzlistingsize
\newdimen\zzlistingsizedefault
\global\def\CommentCopter{0}
\newcommand{\Lstbasicstyle}{\fontsize{\zzlistingsize}{1.05\zzlistingsize}\ttfamily}
\newcommand{\keywordcopter}{\fontsize{1.0\zzlistingsize}{1.0\zzlistingsize}\bf}
\newcommand{\stupidcopter}{\if0\CommentCopter\keywordcopter\fi}
\newcommand{\commentcopter}{\def\CommentCopter{1}\fontsize{0.95\zzlistingsize}{1.0\zzlistingsize}\rmfamily\slshape}
\newlength{\zzlstwidth}
\newcommand{\setlistingsize}[1]{\zzlistingsize=#1%
\settowidth{\zzlstwidth}{{\Lstbasicstyle~}}}
\newcommand{\secttyPottingersymbol}{\widehat{\&}}
\newcommand{\secttyPottinger}{\mathrel{\secttyPottingersymbol}}
\definecolor{dHilite}{rgb}{0.9, 0.9, 0.6}
\definecolor{dRed}{rgb}{0.65, 0.0, 0.0}
\definecolor{DRED}{rgb}{0.65, 0.0, 0.0}
\definecolor{dGreen}{rgb}{0.0, 0.65, 0.0}
\definecolor{dDkGreen}{rgb}{0.0, 0.35, 0.0}
\definecolor{dBlue}{rgb}{0.0, 0.0, 0.65}
\definecolor{dPurple}{rgb}{0.65, 0.0, 0.65}
\definecolor{dDigPurple}{rgb}{0.5, 0.0, 0.5}
\definecolor{DDIGPURPLE}{rgb}{0.5, 0.0, 0.5}  %
\definecolor{dFaint}{rgb}{0.7, 0.7, 0.7}
\definecolor{dGray}{rgb}{0.5, 0.5, 0.5}
\definecolor{dDark}{rgb}{0.2, 0.2, 0.2}
\definecolor{dAlmostBlack}{rgb}{0.1, 0.1, 0.1}
\newcommand{\tytrans}[1]{|{#1}|}
\newcommand{\doublecomma}{{,\hspace{-0.2em},\hspace{0.17em}}}
\newcommand{\Merge}[2]{{#1} \doublecomma {#2}}
\newcommand{\Rmerge}[1]{\ensuremath{\textsf{merge}_{#1}}\xspace}
\newcommand{\eltosymbol}{\hookrightarrow}
\newcommand{\eltox}[1]{\;\mathrel{\,\eltosymbol\,}{#1}}
\newcommand{\elto}[1]{\mathcolor{dBlue}{\eltox{#1}}}
\newcommand{\withcoe}[1]{\mathcolor{dDkGreen}{\;\mathrel{\,:::\,}{#1}}}
\newcommand{\zTwelf}[1]{\textcolor{black}{\texttt{\itshape #1}}}  %
\newcommand{\TwelfFile}[1]{\href{http://www.cs.queensu.ca/~jana/intcomp/#1}{\zTwelf{#1}}}
\newcommand{\TwelfFileSpecial}[2]{\href{http://www.cs.queensu.ca/~jana/#2}{\zTwelf{#1}}}
\newcommand{\Tgamma}{G}
\newcommand{\Recordty}[1]{\texttt{\Lbrace}{#1}\texttt{\Rbrace}}
\newcommand{\Fldty}[2]{\texttt{#1}\,\texttt{:}\,{#2}}
\newcommand{\Fldtysep}{\texttt{,\;}}
\newcommand{\Recordtype}[2]{\Recordty{\Fldty{#1}{#2}}}
\newcommand{\Recordex}[1]{\texttt{\Lbrace}{#1}\texttt{\Rbrace}}
\newcommand{\Fld}[2]{\texttt{#1}\texttt{=}\,{#2}}
\newcommand{\Fldsep}{\Fldtysep}
\newcommand{\Recordexp}[2]{\Recordex{\Fld{#1}{#2}}}
\newcommand{\typeoftm}[1]{typeoftm/{#1}}
\newcommand{\Typeoftm}[1]{\!\!\!\!\begin{tabular}{l} typeoftm/\\{#1}\end{tabular}\!\!}
\title{%
  Elaborating Intersection and Union Types
}
\author{%
     Jana Dunfield%
  \\
  \small Max Planck Institute for Software Systems \\ \small Kaiserslautern and Saarbr\"ucken, Germany
}
\begin{document}
\maketitle

\vspace*{-20pt}

\begin{abstract}
  Designing and implementing typed programming languages is hard.
  Every new type system feature requires extending the metatheory
  and implementation, which are often complicated and fragile.
  To ease this process, we would like to provide general mechanisms
  that subsume many different features.

  In modern type systems, parametric polymorphism is fundamental, but
  intersection polymorphism has gained little traction in programming languages.
  Most practical intersection
  type systems have supported only \emph{refinement intersections}, which increase the expressiveness of
  types (more precise properties can be checked) without altering the expressiveness of
  terms; refinement intersections can simply be erased during compilation.
  In contrast, \emph{unrestricted} intersections increase the expressiveness of terms,
  and can be used to encode diverse language features, promising an economy of
  both theory and implementation.

  We describe a foundation for compiling unrestricted intersection and union types:
  an elaboration type system that generates ordinary $\lambda$-calculus terms.
  The key feature is a Forsythe-like merge construct.
  With this construct, not all reductions of the source program preserve types;
  however, we prove that ordinary call-by-value evaluation of the elaborated
  program corresponds to a type-preserving evaluation of the source program.

  We also describe a prototype implementation and applications of unrestricted
  intersections and unions: records, operator overloading, and simulating dynamic typing.
\end{abstract}

{\fontsize{8pt}{8.5pt}\selectfont%
{%
\category{F.3.3}{Mathematical Logic and Formal Languages}{Studies of Program Constructs---Type structure}
\\[-10pt]}
\\[-10pt]
\keywords intersection types}

\setcounter{footnote}{0}

\section{Introduction}

In type systems, parametric polymorphism is fundamental.  It enables generic 
programming; it supports parametric reasoning about programs.  Logically, it corresponds to 
universal quantification.  

\emph{Intersection} polymorphism (the intersection type $A \sectty B$) is less well appreciated.
It enables ad hoc polymorphism; it supports \emph{irregular} generic programming.
Logically, it roughly corresponds to conjunction\footnote{In our setting, this correspondence is
strong, as we will see in \Secref{sec:sectty-overview}.}. %
Not surprisingly, then, intersection is remarkably versatile. %

For both legitimate and historical reasons, intersection types have not been
used as widely as parametric polymorphism.
One of the legitimate reasons for the slow adoption of intersection types is
that no major language has them.  A restricted form of intersection,
\emph{refinement intersection}, was
realized in two extensions of SML, SML-CIDRE~\citep{DaviesThesis}
and Stardust~\citep{Dunfield07:Stardust}.
These type systems can express properties such as bitwise parity:
after refining a type $\Bits$ of bitstrings with subtypes $\Even$ (an even number of ones) and $\Odd$ (an odd number
of ones), a bitstring concatenation function can be checked against the type

\begin{displ}
  \begin{array}[t]{r@{~}l}
         &   (\Even * \Even \arr \Even) \sectty (\Odd * \Odd \arr \Even)
\\  \sectty& (\Even * \Odd \arr \Odd) \sectty (\Odd * \Even \arr \Odd)      
  \end{array}  
\end{displ}
which satisfies the refinement restriction: all the intersected types refine a single
simple type, $\Bits * \Bits \arr \Bits$.

But these systems were only typecheckers.  To \emph{compile} a program required an ordinary
Standard ML compiler.  SML-CIDRE was explicitly limited to checking refinements
of SML types, without affecting the expressiveness of terms.  In contrast,
Stardust could typecheck some kinds of programs that used
general intersection and union types, but ineffectively:
since ordinary SML compilers don't know about intersection types,
such programs could never be run.

Refinement intersections and unions increase the expressiveness of otherwise more-or-less-conventional
type systems, allowing more precise properties of programs to be
verified through typechecking.  The point is to make
fewer programs pass the typechecker; for example, a concatenation function that didn't have
the parity property expressed by its type would be rejected.
In contrast, unrestricted intersections and unions, in cooperation
with a term-level ``merge'' construct, increase the expressiveness of the term language.
For example, given primitive operations $\texttt{Int.+} : \Int * \Int \arr \Int$
and $\texttt{Real.+} : \Real * \Real \arr \Real$, we can easily define an overloaded addition
operation by writing a merge:
\[
    \Val~{\texttt{+}}~=~{\Merge{\texttt{Int.+}\,}{\,\texttt{Real.+}}}
\]
In our type system, this function \texttt{+} can be checked against the type
$(\Int * \Int \arr \Int) \sectty (\Real * \Real \arr \Real)$.

In this paper, we consider unrestricted intersection and union types.
Central to the approach is a method for elaborating programs with intersection
and union types: elaborate intersections into products, and unions into sums.
The resulting programs have no intersections and no unions, and can be compiled
using conventional means---any SML compiler will do.  The above definition of \texttt{+}
is elaborated to a pair $\Pair{\texttt{Int.+}}{\texttt{Real.+}}$; uses of \texttt{+}
on $\Int$s become first projections of \texttt{+}, while uses on $\Real$s become second
projections of \texttt{+}.

We present a three-phase design, based on this method, that supports
one of our ultimate goals: to develop simpler compilers for full-featured type systems
by encoding many features using intersections and unions.

\begin{enumerate}
\item An \emph{encoding} phase that straightforwardly rewrites the program, for example,
  turning a multi-field record type into an intersection of single-field record types,
  and multi-field records into a ``merge'' of single-field records.

\item An \emph{elaboration} phase that transforms intersections and unions into
  products and (disjoint) sums, and intersection and union introductions and eliminations (implicit
  in the source program) into their appropriate operations: tupling, projection, injection, and case analysis.

\item A \emph{compilation} phase: a conventional compiler with no support for
  intersections, unions, or the features encoded by phase 1.
\end{enumerate}

\paragraph*{Contributions:}  Phase 2 is the main contribution of this paper.
Specifically, we will:

\begin{figure}
$\hspace{15pt}$\begin{tikzpicture}
  [auto, node distance=2cm, >=stealth, %
   descr/.style= {fill=white, inner sep=2.5pt, anchor=center}
  ]
  \node (lsrc) {Program};
  \node [below of=lsrc, node distance=2.5cm] (ltar) {Result};
  \node [right of=lsrc, node distance=1.3cm] (upperleft) {$e : A$};
     \node [above of=upperleft, node distance=1.1cm] (upperleftabove) {Source language};
     \node [above of=upperleft, node distance=0.6cm] (upperlefttypelabel) {$\arr$, $\sectty$, $\unty$};
  \node [below of=upperleft, node distance=2.5cm] (lowerleft) {$v : A$};
  \node [right of=upperleft, node distance=4.1cm] (upperright) {$M : T$};
     \node [above of=upperright, node distance=1.1cm] (upperrightabove) {Target language};
     \node [above of=upperright, node distance=0.6cm] (upperrighttypelabel) {$\arr$, $*$, $+$};
  \node [right of=lowerleft, node distance=4.1cm] (lowerright) {$W : T$};
 \draw [right hook->] (upperleft)  -- node[above] {elaboration} node[below]{} (upperright);
  \draw [right hook->] (lowerleft) -- node[above] {elaboration} node[below]{} (lowerright);

  \draw [->,decorate,decoration={zigzag,segment length=4pt,amplitude=0.8pt,post=lineto,post length=3pt}]
         ($(upperleft.south)+(-7pt,-2pt)$)
                 -- node[right] {\!\!\!\!
                   \begin{tabular}{l}
                     nondeterministic \\ evaluation \\ (cbv + merge)
                   \end{tabular}}
         ($(lowerleft.north)+(-7pt,0pt)$);

  \draw [|->]
      ($(upperright.south)+(-7pt,-2pt)$)
            -- node[right] {\!\!\!%
                   \begin{tabular}{l}
                     standard \\ evaluation \\ (cbv)
                   \end{tabular}}
      ($(lowerright.north)+(-7pt,1pt)$);
\end{tikzpicture}
\caption{Elaboration and computation}
\Label{fig:diagram}
\end{figure}

\begin{itemize}
\item develop elaboration typing rules which, given a source expression $e$
  with unrestricted intersections and unions, and a ``merging'' construct $\Merge{e_1}{e_2}$,
  typecheck and transform the program into an ordinary $\lambda$-calculus
  term $M$ (with sums and products);
\item give a nondeterministic operational semantics ($\stepse$) for source programs containing merges,
  in which not all reductions preserve types;
\item prove a consistency (simulation) result: ordinary call-by-value evaluation ($\stepstm$) of the elaborated
  program produces a value corresponding to a value resulting from (type-preserving)
  reductions of the source program---that is, the diagram in \Figureref{fig:diagram} commutes;
\item describe an elaborating typechecker that, by implementing the elaboration typing rules, takes
  programs written in an ML-like language, with unrestricted intersection and union types,
  and generates Standard ML programs that can be compiled with any SML compiler.
\end{itemize}

All proofs were checked using the Twelf proof assistant~\citep{Pfenning99:Twelf,TwelfWiki}
(with the termination checker silenced for a few inductive cases, where the
induction measure was nontrivial)
and are available on the web~\citep{TwelfSupp}.  For convenience, the names
of Twelf source files (\TwelfFileSpecial{.elf}{intcomp/}) are hyperlinks.

While the idea of compiling intersections to products is not new, this paper is its first
full development and practical expression.
An essential twist is the source-level merging construct $\Merge{e_1}{e_2}$, which
embodies several computationally distinct terms, which can be checked against various parts
of an intersection type, reminiscent of Forsythe~\citep{Reynolds96:Forsythe} and (more distantly)
the $\lambda\&$-calculus~\citep{Castagna95}.
Intersections can still be introduced \emph{without} this
construct; it is required only when no single term can describe the multiple behaviours
expressed by the intersection.  Remarkably, this merging construct also supports union eliminations with
two computationally distinct branches (unlike markers for union elimination in work such as
\citet{Pierce91:IntersectionsUnionsPolymorphism}).
As usual, we have no source-level intersection eliminations and no source-level union
introductions; elaboration puts all needed projections and injections
into the target program.

\paragraph*{Contents:}  In \Sectionref{sec:sectty-overview}, we give some brief background on intersection types,
discuss their introduction and elimination rules,
introduce and discuss the merge construct, and compare intersection types to product types.
\Sectionref{sec:unty-overview} gives background on union types, discusses \emph{their} introduction
and elimination rules, and shows how the merge construct is also useful for them.

\Sectionref{sec:source} has the details of the source language and its (unusual) operational
semantics, and describes a non-elaborating type system including subsumption.
\Sectionref{sec:target} presents the target language and its (entirely standard)
typing and operational semantics.  \Sectionref{sec:elaboration} gives the elaboration typing rules,
and proves several key results relating source typing, elaboration typing,
the source operational semantics, and the target operational semantics.

\Sectionref{sec:coherence} discusses a major caveat: the approach, at least in its present form,
lacks the theoretically and practically important property of coherence, because
the meaning of a target program depends on the choice of elaboration typing derivation.

\Sectionref{sec:encodings}
shows encodings of type system features into intersections and unions,
with examples that are successfully elaborated by our prototype implementation
(\Sectionref{sec:implementation}).  Related work is discussed in \Sectionref{sec:related},
and \Sectionref{sec:conclusion} concludes.

\section{Intersection Types}   \Label{sec:sectty-overview}

What is an intersection type?  The simplistic answer is that,
supposing that types describe sets of values,
$A \sectty B$ describes the intersection of the sets of values of $A$ and $B$.  That is,
$v : A \sectty B$ if $v : A$ and $v : B$.

Less simplistically, the name has been used for substantially different type constructors, though all have a conjunctive flavour.
The intersection type in this paper is commutative ($A \sectty B = B \sectty A$) and idempotent ($A \sectty A = A$),
following several seminal papers on intersection types~\citep{Pottinger80,CDV81:IntersectionTypes} and
more recent work with refinement intersections~\citep{Freeman91,Davies00icfpIntersectionEffects,Dunfield03:IntersectionsUnionsCBV}.
Other lines of research have worked with nonlinear and/or ordered intersections, e.g.\ \citet{Kfoury04:PrincipalityExpansion},
which seem less directly applicable to practical type systems~\citep{MN04:againstSystemI}.

For this paper, then: What is a commutative and idempotent intersection type?

One approach to this question is through the Curry-Howard correspondence.
Naively, intersection should correspond to logical conjunction---but products correspond
to logical conjunction, and intersections are not products, as is evident from comparing the
standard\footnote{For impure call-by-value languages like ML,
\Rsectintro ordinarily needs to be restricted to type a value $v$, for reasons analogous to the value restriction
on parametric polymorphism~\citep{Davies00icfpIntersectionEffects}.  Our setting, however, is not ordinary:
the technique of elaboration makes the more permissive rule safe, though user-unfriendly.  See \Sectionref{sec:value-restriction}.}
introduction and elimination rules for intersection to the (utterly standard)
rules for product.  (Throughout this paper, $k$ is existentially quantified over
$\{1, 2\}$; technically, and in the Twelf formulation, we have two rules
\Rsectelim{1} and \Rsectelim{2}, etc.)

\begin{mathdispl}
    \Infer{\Rsectintro}
          {e : A_1 \\ e : A_2}
          {e : A_1 \sectty A_2}
    \and
    \Infer{\Rsectelim{k}}
          {e : A_1 \sectty A_2}
          {e : A_k}
    \vspace{-4pt}
    \\
    \Infer{\Rprodintro}
         {e_1 : A_1  \\  e_2 : A_2}
         {\Pair{e_1}{e_2} : A_1 * A_2}
    \and
    \Infer{\Rprodelim{k}}
         {e : A_1 * A_2}
         {\Proj{k} e : A_k}
\end{mathdispl}

Here \Rsectintro types a single term $e$ which inhabits type $A_1$ \emph{and} type $A_2$:
via Curry-Howard, this means that a single proof term serves as witness to two propositions
(the interpretations of $A_1$ and $A_2$).  On the other hand, in \Rprodintro two separate
terms $e_1$ and $e_2$ witness the propositions corresponding to $A_1$ and $A_2$.
This difference was suggested by \citet{Pottinger80}, and made concrete when
\citet{Hindley84} showed that intersection (of the form described by
\citet{CDV81:IntersectionTypes} and \citet{Pottinger80})
cannot correspond to  conjunction because the following type, the intersection of the types
of the $I$ and $S$ combinators, is uninhabited:
\[
    (A \arr A) \sectty \underbrace{\big((A{\arr}B{\arr}C) \arr (A{\arr}B) \arr A \arr C\big)}_{\text{``$D$''}}
\]
yet the prospectively corresponding proposition is provable in intuitionistic logic:
\[
    (A \logimp A)
    \mathsf{~and~}
    \big((A{\logimp}B{\logimp}C) \logimp (A{\logimp}B) \logimp A\logimp C\big)
    \tag{*}
\]
Hindley notes that every term of type $A \arr A$ is $\beta$-equivalent to $e_1 = \Lam{x} x$,
and
every term of type $D$ is $\beta$-equivalent to $e_2 = \Lam{x}\Lam{y}\Lam{z} x\,z\,(y\,z)$,
the $S$ combinator.  Any term $e$ of type
$(A\,{\arr}\,A) \sectty D$ must therefore have two normal forms, $e_1$ and $e_2$,
which is impossible.

But that impossibility holds for the \emph{usual} $\lambda$-terms.  Suppose we add a \emph{merge} construct $\Merge{e_1}{e_2}$
that, quite brazenly, can step to two different things: $\Merge{e_1}{e_2} \step e_1$
and $\Merge{e_1}{e_2} \step e_2$.  Its typing rule chooses one subterm and ignores the other
(throughout this paper, the subscript $k$ ranges over $\{1, 2\}$):
\vspace{-2pt}
\begin{mathdispl}
   \Infer{\Rmerge{k}}
        {  e_k : A  }
        {\Merge{e_1}{e_2} : A}
\end{mathdispl}
In combination with \Rsectintro, the \Rmerge{k} rule allows two distinct implementations $e_1$ and $e_2$,
one for each of the components $A_1$ and $A_2$ of the intersection:
\begin{mathdispl}
  \Infer{\Rsectintro}
   {
      \Infer{\Rmerge{1}}
          {e_1 : A_1}
          {\Merge{e_1}{e_2} : A_1}
      \\
      \Infer{\Rmerge{2}}
          {e_2 : A_2}
          {\Merge{e_1}{e_2} : A_2}
    }
    {
        \Merge{e_1}{e_2} : A_1 \sectty A_2
    }
\end{mathdispl}

Now $(A \arr A) \sectty D$ \emph{is} inhabited:
\[
  \Merge{e_1}{e_2} : (A \arr A) \sectty D
\]
With this construct, the ``naive'' hope that intersection corresponds to conjunction is realized
through elaboration:
we can elaborate $\Merge{e_1}{e_2}$ to $\Pair{e_1}{e_2}$, a term of type $(A \arr A) * D$,
which does correspond to the proposition (*).  Inhabitation and provability again correspond---because we
have replaced the seemingly mysterious intersections with simple products.

For source expressions, intersection still has several properties that set it apart from product.
Unlike product, it has no elimination form.  It also lacks
an explicit introduction form; \Rsectintro is the only intro rule for $\sectty$.  While the primary purpose
of \Rmerge{k} is to derive the premises of \Rsectintro, the \Rmerge{k} rule makes no mention of
intersection (or any other type constructor).  %

\citet{Pottinger80} presents intersection $A \secttyPottinger B$ as a proposition with some evidence of $A$
that is also evidence of $B$---unlike $A \mathrel{\&} B$, corresponding to $A * B$, which has two separate pieces
of evidence for $A$ and for $B$.   In our system, though, $\Merge{e_1}{e_2}$ is a single term that
provides evidence for $A$ and $B$, so it is technically consistent with this view of intersection,
but not necessarily consistent in spirit (since $e_1$ and $e_2$ can be very different from
each other).

%
%

%
%
%
%
%
%
%

%
 %
%
%
%
%

\begin{comment}
An intersection, therefore, can express both refinement properties (which don't have to be
compiled to two instances) and non-refinement properties (which generally must be compiled
to two instances\footnote{Monomorphization in parametric polymorphism,
which specializes a generic source term at specific types, is a useful analogy but not a perfect one:
for non-refinement properties, we may already have several type-specific terms.}).
In fact, this ``dual use'' of intersection types is a form of encapsulation: $v : A \sectty B$ tells
us that the source-language value $v$ has properties $A$ and $B$.  We don't know, and don't
have to know, how many variants $v$ will get compiled to.
\end{comment}

\section{Union Types}   \Label{sec:unty-overview}

Having discussed intersection types, we can describe union types as
intersections' dual: if $v : A_1 \unty A_2$ then either $v : A_1$ or $v : A_2$
(perhaps both).  This duality shows itself in several ways.

For union $\unty$, introduction is straightforward, as elimination was straightforward
for $\sectty$ (again, $k$ is either $1$ or $2$):
\begin{mathdispl}
    \Infer{\Runintro{k}}
        {\Gamma \entails e : A_k }
        {\Gamma \entails e : A_1 \unty A_2  }
\end{mathdispl}

Coming up with a good elimination rule is trickier.  A number of appealing rules
are unsound; a sound, yet acceptably strong, rule is
\vspace{-8pt}
\begin{mathdispl}
      \Infer{\Runelim}
        {\Gamma \entails e_0 : A_1 \unty A_2
         ~~~~~~
         \arrayenvbl{\Gamma, x_1 : A_1 \entails \E[x_1] : C  
            \\
            \Gamma, x_2 : A_2 \entails \E[x_2] : C }
        }
        {\Gamma \entails \E[e_0] : C }
\end{mathdispl}

This rule types an expression $\E[e_0]$---an evaluation context $\E$ with $e_0$
in an evaluation position---where $e_0$ has the union type $A_1 \unty A_2$.
During evaluation,
$e_0$ will be some value $v_0$ such that either $v_0 : A_1$ or $v_0 : A_2$.
In the former case, the premise $x_1 : A_1 \entails \E[x_1] : C$
tells us that substituting $v_0$ for $x_1$ gives a well-typed expression $\E[v_0]$.
Similarly, the premise $x_2 : A_2 \entails \E[x_2] : C$
tells us we can safely substitute $v_0$ for $x_2$.

The restriction to a single occurrence of $e_0$ in an evaluation position
is needed for soundness in many settings---generally, in any operational
semantics in which $e_0$ might step to different expressions.  One simple example
is a function $f : (A \arr A \arr C) \sectty (B \arr B \arr C)$ and expression
$e_0 : A \unty B$, where $e_0$ changes the contents pointed to by a reference
of type $\Refty{(A \unty B)}$, before returning the new value.  The application $f\;e_0\;e_0$
would be well-typed by a rule allowing multiple occurrences of $e_0$,
but unsound: the first $e_0$ could evaluate to an $A$ and the second $e_0$
to a $B$.

The evaluation context $\E$ need not be unique, which creates some
difficulties for practical typechecking~\citep{Dunfield11:letnormal}.
For further discussion of this rule, see~\citet{Dunfield03:IntersectionsUnionsCBV}.  

We saw in \Sectionref{sec:sectty-overview} that, in the usual $\lambda$-calculus,
$\sectty$ does not correspond to conjunction; in particular, no $\lambda$-term behaves
like both the $I$ and $S$ combinators, so the intersection $(A{\arr}A) \sectty D$ (where $D$ is the
type of $S$) is uninhabited.  In our setting, though,
$(A{\arr}A) \sectty D$ \emph{is} inhabited, by the merge of $I$ and $S$.

Something similar comes up when eliminating unions.  Without the merge construct,
certain instances of union types can't be usefully eliminated.  Consider a
list whose elements have type $\Int \unty \String$.  Introducing those unions
to create the list is easy enough: use \Runintro{1} for the $\Int$s and \Runintro{2}
for the $\String$s.  Now suppose we want to print a list element $x : \Int \unty \String$,
converting the $\Int$s to their string representation and leaving the $\String$s alone.
To do this, we need a merge; for example, given a function
$g : (\Int \arr \String) \sectty (\String \arr \String)$
whose body contains a merge, use rule \Runelim on $g~x$ with $\E = g~\hole$
and $e_0 = x$.

Like intersections, unions can be tamed by elaboration.  Instead of
products, we elaborate unions to products' dual, sums (\emph{tagged} unions).
Uses of \Runintro{1} and \Runintro{2} become left and
right injections into a sum type; uses of \Runelim become ordinary case expressions.

\section{Source Language} \Label{sec:source}

\subsection{Source Syntax}

\begin{figure}[h]
  \centering

\begin{array}[t]{rcl}
\end{array}
  
  \begin{tabular}[t]{r@{~~}r@{~~}r@{~}lll}
    Source types & $A, B, C$ & $\bnfas$ &
                           $\topty
                           \bnfalt A \arr B
                           \bnfalt A \sectty B
                           \bnfalt A \unty B$
\\[4pt]
    Typing contexts & $\Gamma$ & $\bnfas$ &
                     $\cdot \bnfalt \Gamma, x : A$
\\[4pt]
    Source expressions  &   $e$ & $\bnfas$
                    &  $x \bnfalt \unit \bnfalt \Lam{x} e \bnfalt e_1\,e_2 \bnfalt \Fix{x} e$
\\  &&$\bnfalt$ & $\Merge{e_1}{e_2}$
\\[4pt]
    Source values  &   $v$ & $\bnfas$ &  $x \bnfalt \unit \bnfalt \Lam{x} e \bnfalt \Merge{v_1}{v_2}$
\\[4pt]
   Evaluation contexts  &  $\E$ & $\bnfas$
   &  $\hole \bnfalt \E\;e \bnfalt  v\;\E  \bnfalt  \Merge{\E}{e}  \bnfalt  \Merge{e}{\E}$
  \end{tabular}

  \caption{Syntax of source types, contexts and expressions}
  \label{fig:source-syntax}
\end{figure}

The source language expressions $e$ are standard, except for the feature central
to our approach, the merge $\Merge{e_1}{e_2}$.  The types $A, B, C$ are a
``top'' type $\topty$ (which will be elaborated to $\Unit$), the usual function space
$A \arr B$, intersection $A \sectty B$ and union $A \unty B$.  Values $v$ are standard,
but a merge of values $\Merge{v_1}{v_2}$ is considered a value, even though it can step!
But the step it takes is pure, in the sense that even if we incorporated (say) mutable references,
it would not interact with them.  %

\subsection{Source Operational Semantics}

\begin{figure}[htbp]
\judgboxtwelf{e \stepe e'}%
     {Source expression $e$ \\ steps to $e'$}%
     {\zTwelf{step E E'} ~in~ \TwelfFile{step.elf}}
  \begin{mathpar}
    \Infer{step/app1}
        { e_1 \stepe e_1' }
        { e_1 e_2 \stepe e_1' e_2 }
    \rulesep
    \Infer{step/app2}
        { e_2 \stepe e_2'}
        { v_1 e_2  \stepe  v_1 e_2' }
    \vspace{-6pt}
    \\
    \Infer{step/beta}
        { }
        { (\Lam{x} e) v  \stepe  [v/x]e }
    \vspace{-9pt}
    \\
    \Infer{step/fix}
         { }
         { \Fix{x} e \stepe [(\Fix{x}e) / x]e }
    \vspace{-4pt}
    \\
    \Infer{step/unmerge left}
        { }
        { \Merge{e_1}{e_2} \stepe e_1 }
    \rulesep
    \Infer{step/unmerge right}
        { }
        { \Merge{e_1}{e_2} \stepe e_2 }
    \vspace{-2pt}
    \\
    \Infer{step/merge1}
        { e_1 \stepe e_1' }
        { \Merge{e_1}{e_2} \stepe \Merge{e_1'}{e_2} }
    \rulesep
    \Infer{step/merge2}
        { e_2 \stepe e_2' }
        { \Merge{e_1}{e_2} \stepe \Merge{e_1}{e_2'} }
    \vspace{-6pt}
    \\
    \Infer{step/split}
        { }
        { e \stepe \Merge{e}{e} }
  \end{mathpar}

  \caption[source-opsem]{\!\!\tabularenvl{Source language operational semantics: \\ call-by-value + merge construct}\!\!}
  \FLabel{fig:source-opsem}
\end{figure}

The source language operational semantics (\Figureref{fig:source-opsem}) is standard (call-by-value
function application and a fixed point expression) except for the merge construct.  This peculiar
animal is a descendant of ``demonic choice'': by the `step/unmerge left' and
`step/unmerge right' rules, $\Merge{e_1}{e_2}$ can step to either $e_1$ or $e_2$.
Adding to its misbehaviours, it permits stepping within itself (`step/merge1' and `step/merge2'---note
that in `step/merge2', we don't require $e_1$ to be a value).  Worst of all, it can appear by spontaneous
fission: `step/split' turns any expression $e$ into a merge of two copies of $e$.

The merge construct makes our source language operational semantics
interesting.  It also makes it unrealistic: $\stepe$-reduction does not preserve types.
For type preservation to hold, the operational semantics would need access
to the typing derivation.  Worse, since the typing rule for merges ignores
the unused part of the merge, $\stepe$-reduction can produce expressions
that have no type at all, or are not even closed!  The point of the source operational semantics is not to
directly model computation; rather, it is a basis for checking that the elaborated
program (whose operational semantics is perfectly standard) makes sense.
We will show in \Sectionref{sec:elaboration} that, if the result $M$ of elaborating $e$
can step to some $M'$, then we can step $e \stepse e'$ where $e'$ elaborates to $M'$.

\subsection{(Source) Subtyping}

Suppose we want to pass a function $f : A \arr C$ to a function $g : ((A \sectty B) \arr C) \arr D$.
This should be possible, since $f$ requires only that its argument have type $A$; in all calls
from $g$ the argument to $f$ will also have type $B$, but $f$ won't mind.  With only the
rules discussed so far, however, the application $g~f$ is not well-typed: we can't
get inside the arrow $(A \sectty B) \arr C$.  For flexibility, we'll incorporate a subtyping system
that can conclude, for example, $A \arr C \subtype (A \sectty B) \arr C$.

The logic of the subtyping rules (\Figureref{fig:source-typing}, top) is taken straight from
\citet{Dunfield03:IntersectionsUnionsCBV}, so we only briefly give some intuition.  Roughly,
$A \subtype B$ is sound if every value of type $A$ can be treated as having type $B$.  Under a subset
interpretation, this would mean that $A \subtype B$ is justified if the set of $A$-values is a subset of
the set of $B$-values.  For example, the rule \subSectR, if interpreted set-theoretically, says that
if $A \subseteq B_1$ and $A \subseteq B_2$ then $A \subseteq (B_1 \sect B_2)$.

It is easy to show that subtyping is reflexive and transitive; see \TwelfFile{sub-refl.elf}
and \TwelfFile{sub-trans.elf}.  (Building transitivity into the structure of the rules makes
it easy to derive an algorithm; an explicit transitivity rule would have premises
$A \subtype B$ and $B \subtype C$, which involve an intermediate type $B$ that does not
appear in the conclusion $A \subtype C$.)

Having said all that, the subsequent theoretical development is
easier without subtyping.  So we will show (\Theoremref{thm:coerce})
that, given a typing derivation that uses subtyping (through the usual subsumption rule), we can
always construct a source expression of the same type that never applies the subsumption rule.
This new expression will be the same as the original one, with a few additional coercions.  For
the example above, we essentially $\eta$-expand $g~f$ to $g~(\Lam{x} f~x)$, which lets us apply
\Rsectelim{1} to $x : A \sectty B$.  Operationally, all the coercions are identities; they serve
only to ``articulate'' the type structure, making subsumption unnecessary.

Note that the coercion in rule \subUnionL is eta-expanded to allow \Runelim to eliminate
the union in the type of $x$; as discussed later, the subexpression of union type must be in
evaluation position.

\begin{comment}
        \begin{theorem}[Reflexivity]
          \Label{thm:sub-refl}
        %
          Subtyping is reflexive: For all types $A$, we have $\cdot \entails A \subtype A$.
        \end{theorem}
        \begin{proof}  By structural induction on $A$.
        \end{proof}

        \begin{theorem}[Transitivity]
          \Label{thm:sub-trans}
        ~\\
          Subtyping is transitive: For all types $A$, $B$, $C$, if $\cdot \entails A \subtype B$
          and $\cdot \entails B \subtype C$ then $\cdot \entails A \subtype C$.
        \end{theorem}
        \begin{proof}  By induction on the sum of the sizes of the given derivations.
        \end{proof}
\end{comment}

\subsection{Source Typing}

\begin{figure*}[htbp]
  \centering

\judgboxtwelf{A \subtype B \withcoe{e}}%
     {Source type $A$ is a subtype of source type $B$, \\
      with coercion $e$ of type $\cdot \entails e : A \arr B$}%
     {\zTwelf{sub A B Coe CoeTyping} ~in~ \TwelfFile{typeof+sub.elf}}

  \begin{mathpar}
     \Infer{\subArr}
         { B_1 \subtype A_1 \withcoe{e}
           \\
           A_2 \subtype B_2 \withcoe{e'}
           }
         { A_1 \arr A_2 \subtype B_1 \arr B_2
           \withcoe{\Lam{f} \Lam{x} e'\;(f~(e~x))}
         }
    \and
    \Infer{\subTopR}
        {}
        { A \subtype \topty
            \withcoe
           {\Lam{x} \unit}  }
    \and
    \Infer{\subSectL{k}}
         { A_k \subtype B \withcoe {e}
         }
         { A_1 \sectty A_2 \subtype B
           \withcoe
           {e}}
    \rulesep
    \Infer{\subSectR}
         { A \subtype B_1 \withcoe {e_1}
           \\
           A \subtype B_2 \withcoe {e_2}
         }
         { A \subtype B_1 \sectty B_2
           \withcoe
           { \Merge{e_1}{e_2} }
         }
     \and
     \Infer{\subUnionL}
         { A_1 \subtype B \withcoe{e_1}
           \\
           A_2 \subtype B \withcoe{e_2}
           }
         { A_1 \unty A_2 \subtype B
             \withcoe{\Lam{x} (\Lam{y} \Merge{e_1\,y}{e_2\,y})\;x}
         }
     \rulesep
     \Infer{\subUnionR{k}}
         { A \subtype B_k \withcoe{e}
           }
         { A \subtype B_1 \unty B_2
             \withcoe{e}
         }
   \end{mathpar}
~\\

\judgboxtwelf{\Gamma \entails e : A}%
     {Source expression $e$ has source type $A$}%
     {\zTwelf{typeof+sub E A} ~in~ \TwelfFile{typeof+sub.elf}}

  \begin{mathpar}
    \Infer{\Rvar}
         {}
         {\Gamma_1, x : A, \Gamma_2 \entails x : A }
    \and
    \Infer{\Rmerge{k}}
        {\Gamma \entails   e_k : A }
        {\Gamma \entails \Merge{e_1}{e_2} : A  }
     \and
     \Infer{\Rfix}
          {\Gamma, x : A \entails e : A }
          {\Gamma \entails \Fix{x} e : A}
    \and
    \Infer{\Rtopintro}
          {}
          {\Gamma \entails v : \topty }
    \and
    \Infer{\Rlam}
          {\Gamma, x : A \entails e : B }
          {\Gamma \entails \Lam{x} e : A \arr B}
    \rulesep
    \Infer{\Rapp}
         {\Gamma \entails e_1 : A \arr B 
          \\
          \Gamma \entails e_2 : A 
         }
         {\Gamma \entails e_1\,e_2 : B }
    \\
    \Infer{\Rsectintro}
        {\Gamma \entails e : A_1    \\   \Gamma \entails e : A_2 }
        {\Gamma \entails e : A_1 \sectty A_2}
    \rulesep
    \Infer{\Rsectelim{k}}
        {\Gamma \entails e : A_1 \sectty A_2 }
        {\Gamma \entails e : A_k  }
    \vspace{-10pt}
    \\
    \Infer{\Rdirect}
        {\Gamma \entails e_0 : A  
         \\
          \Gamma, x : A \entails \E[x] : C   }
        {\Gamma \entails \E[e_0] : C}
    \and
    \Infer{\Runintro{k}}
        {\Gamma \entails e : A_k }
        {\Gamma \entails e : A_1 \unty A_2  }
    \rulesep
    \Infer{\Runelim}
        {\Gamma \entails e_0 : A_1 \unty A_2
         ~~~~~~
         \arrayenvbl{\Gamma, x_1 : A_1 \entails \E[x_1] : C  
            \\
            \Gamma, x_2 : A_2 \entails \E[x_2] : C }
        }
        {\Gamma \entails \E[e_0] : C }
    \\
    \Infer{\Rsub}
        {\Gamma \entails e : A
          \\
          A \subtype B \withcoe{e_\text{coerce}}
        }
        {\Gamma \entails e : B }
   \end{mathpar}      

  \caption{Source type system, with subsumption, non-elaborating}
  \FLabel{fig:source-typing}
\end{figure*}

The source typing rules (\Figureref{fig:source-typing}) are either standard or have
already been discussed in Sections \ref{sec:sectty-overview} and \ref{sec:unty-overview},
except for \Rdirect.

The \Rdirect rule was introduced and justified in \citet{Dunfield03:IntersectionsUnionsCBV,Dunfield04:Tridirectional}.
It is a 1-ary version of \Runelim, a sort of cut: a use of the typing $e_0 : A$ within the derivation
of $\E[e_0] : C$ is replaced by a derivations of $e_0 : A$, along with a derivation of $\E[x] : C$
that assumes $x : A$.  Curiously, in this system of rules, \Rdirect is admissible: given $e_0 : A$,
use \Runintro{1} or \Runintro{2}
to conclude $e_0 : A \unty A$, then use two copies of the derivation $x : A \entails \E[x] : C$
in the premises of \Runelim ($\alpha$-converting $x$ as needed).  So why include it?
Typing using these rules is undecidable; our implementation (\Sectionref{sec:implementation})
follows a bidirectional version of them (where typechecking is decidable, given a few annotations, similar to \citet{Dunfield04:Tridirectional}),
where \Rdirect is \emph{not} admissible.  
(A side benefit is that \Rdirect and \Runelim are similar enough that it can be helpful to
do the \Rdirect case of a proof before tackling \Runelim.)

\begin{remark*}
\Theoremref{thm:coerce}, and all subsequent theorems, are proved only for expressions
that are closed under the appropriate context, even though \Rmerge{k} does not explicitly require
that the unexamined subexpression be closed; Twelf does not support proofs
about objects with unknown variables.
\end{remark*}

\begin{theorem}[Coercion]  \Label{thm:coerce}
   If $\Dee$ derives $\Gamma \entails e : B$ then
   there exists an $e'$ such that $\Dee'$ derives
   $\Gamma \entails e' : B$, where $\Dee'$ never uses rule \Rsub.
\end{theorem}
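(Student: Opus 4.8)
The plan is to prove \Theoremref{thm:coerce} by structural induction on the derivation $\Dee$ of $\Gamma \entails e : B$, rewriting each subderivation into one that never uses \Rsub. Before the main induction I would establish two auxiliary facts. First, a \emph{coercion typing lemma}: whenever $A \subtype B \withcoe{e_c}$, there is a \Rsub-free derivation of $\cdot \entails e_c : A \arr B$ (and hence, by weakening, of $\Gamma \entails e_c : A \arr B$). Second, a \emph{substitution lemma}: \Rsub-free typings are closed under substitution of \Rsub-free-typed terms for variables. Both go by routine inductions, but the coercion typing lemma has two cases that need thought, \subSectL{k} and \subUnionR{k}, because there the coercion term is reused unchanged while the type shifts (from $A_k$ to $A_1 \sectty A_2$ on the left, or from $B_k$ to $B_1 \unty B_2$ on the right). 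I would handle these by adapting the coercion's typing derivation rather than its term: a domain shift to $A_1 \sectty A_2$ is absorbed by prefixing each use of the bound argument with \Rsectelim{k}, and a codomain shift to $B_1 \unty B_2$ by concluding the body with \Runintro{k}; both are \Rsub-free, and the grammar of coercions guarantees they are always $\lambda$-abstractions (or merges) so that these manipulations apply.

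For the main induction, the rules \Rvar, \Rtopintro, \Rlam, \Rapp, \Rfix, \Rsectelim{k}, and \Runintro{k} are handled by applying the induction hypothesis to each premise and reassembling with the same rule; the subject may acquire coercions, but each rule is stable under this. For \Rmerge{k}, since the rule inspects only the branch $e_k$, I take the coerced version of that branch. The \Rsub case is the heart of the reduction: from $\Gamma \entails e : A$ and $A \subtype B \withcoe{e_c}$, the induction hypothesis gives an $e'_0$ with a \Rsub-free derivation of $\Gamma \entails e'_0 : A$, and I set $e' := e_c\,e'_0$, typed \Rsub-free by \Rapp using the coercion typing lemma. This is exactly the ``inserting a coercion'' illustrated by rewriting $g~f$ to $g~(\Lam{x} f~x)$.

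The rules that share a subject across premises need more care. For \Rsectintro the two premises type the \emph{same} $e$ at $A_1$ and at $A_2$, but the induction hypothesis may return syntactically different coerced terms $e'_1$ and $e'_2$; I recombine them with a merge, taking $e' := \Merge{e'_1}{e'_2}$, which is typable at $A_1 \sectty A_2$ by \Rmerge{1}, \Rmerge{2}, and \Rsectintro, a pleasant instance of the paper's thesis that a merge realizes intersection introduction when no single term suffices. For \Rdirect, which is essentially a cut, I dispense with the rule entirely: the induction hypothesis on $\Gamma, x : A \entails \E[x] : C$ yields a \Rsub-free $G$, and the substitution lemma delivers a \Rsub-free derivation of $\Gamma \entails [e'_0/x]G : C$.

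I expect the \Runelim case to be the main obstacle. Reconstructing \Runelim requires the two coerced branches to share a single evaluation context $\E'$ with $\E'[x_1]$ and $\E'[x_2]$ both \Rsub-free, so that $\E'[e'_0]$ is well typed. But coercing the branches independently can insert \emph{different} coercions $c_1$ and $c_2$ around the eliminated variable, for instance when $x_k : A_k$ must be subsumed to distinct types before being consumed by an overloaded function in $\E$, and this destroys the common context. The plan is to prove that coercing $\E[x]$ factors as $\E^*[d\,x]$, where the coerced context $\E^*$ depends only on the shape of $\E$ while the variable coercion $d$ carries the type-specific part; I then bundle the two variable coercions into a single merged coercion $h := \Merge{c_1}{c_2}$ of type $(A_1 \arr B_1) \sectty (A_2 \arr B_2)$ and use the common context $\E' := \E^*[h\,\hole]$. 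In each branch \Rsectelim{k} selects the appropriate arm of $h$, keeping both fillings \Rsub-free, and \Runelim on $e'_0 : A_1 \unty A_2$ then dispatches. Making this factorization precise, by tracking how coercions interact with evaluation positions and why the hole variable is never duplicated into a non-evaluation position, is where the real work lies, and it is the reason the merge construct (and the eta-expanded \subUnionL coercion) must already be available.
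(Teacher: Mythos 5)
Your overall strategy---induction on $\Dee$, a lemma giving every coercion a \Rsub-free typing at $A \arr B$, applying that coercion in the \Rsub case, and repairing \Rsectintro with a merge $\Merge{e_1'}{e_2'}$ when the two coerced premises separate---matches the paper's proof, and your treatment of $\subSectL{k}$ and $\subUnionR{k}$ (re-typing the unchanged coercion term by inserting \Rsectelim{k} at uses of the bound variable, or \Runintro{k} at the body) is exactly what the Twelf development achieves by carrying the coercion's typing as a hypothetical derivation inside the subtyping judgment. The gap is in \Runelim, which the paper singles out (together with \Rsub) as one of the two interesting cases. Your plan rests on a factorization claim: that the coerced version of $\E[x_k]$ has the form $\E^*[c_k\,x_k]$ with $\E^*$ ``depending only on the shape of $\E$.'' This is false. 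The premises $\Gamma, x_k : A_k \entails \E[x_k] : C$ are two independent derivations that may invoke \Rsub at arbitrary positions throughout $\E$, not just at the eliminated variable, and may differ structurally (one may use \Rsectintro where the other does not); after coercion insertion the two branches need not share any nontrivial common context, so no merged ``variable coercion'' $h$ can reconcile them. The paper's fix is blunter and reuses precisely the merge idea you already deployed for \Rsectintro: take $e' = (\Lam{y}\,\Merge{e_1'}{e_2'})\;e_0'$ where $e_1'$ and $e_2'$ are the \emph{entire} coerced branches. The common evaluation context is then $(\Lam{y}\,\Merge{e_1'}{e_2'})\;\hole$, so $e_0'$ sits in evaluation position, and in the $k$-th premise of the reconstructed \Runelim the rule \Rmerge{k} simply selects $e_k'$; no alignment of the branches is ever needed.

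A secondary problem is \Rdirect: you discharge it by substituting $e_0'$ into the coerced body $G$, but the substitution lemma you invoke cannot be proved for arbitrary (non-value) $e_0'$ in this system. Evaluation contexts include the production $v\;\E$, and a variable is a value, so substituting a non-value for a variable occurring in function position destroys an evaluation position and with it the applicability of \Rdirect and \Runelim inside $G$; this is why the paper's substitution lemma (\Lemmaref{lem:subst-elab}) is stated only for values. The remedy is the same as for \Runelim: keep the cut explicit as $(\Lam{x} G)\;e_0'$ rather than performing the substitution.
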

\begin{proof}
  By induction on $\Dee$.  The interesting cases are for \Rsub and \Runelim.
  In the case for \Rsub with $A \subtype B$, we show that when the coercion $e_\text{coerce}$---which always has the form
  $\Lam{x} e_0$---is applied to an expression of type $A$, we get an expression of type $B$.  For example,
  for $\subSectL{1}$ we use \Rsectelim{1}.  This shows that $e' = (\Lam{x} e_0) \; e$ has type $B$.

  For \Runelim, the premises typing $\E[x_k]$ might ``separate'', say if the first includes subsumption
  (yielding the same $\E[x_1]$) and the second doesn't.  Furthermore, inserting coercions could break
  evaluation positions: given $\E = f~\hole$, replacing $f$ with an application $(e_\text{coerce}~f)$
  means that $\hole$ is no longer in evaluation position.  To handle these issues, let
  $e' = (\Lam{y} \Merge{e_1'}{e_2'})~e_0'$, where $e_0'$ comes from applying the induction
  hypothesis to the derivation of $\Gamma \entails e_0 : A_1 \unty A_2$, and $e_1'$ and $e_2'$ come
  from applying the induction hypothesis to the other two premises.  Now $e_0'$ \emph{is} in evaluation
  position, because it follows a $\lambda$; the \Rmerge{k} typing rule will choose the correct branch.

  For details, see \TwelfFile{coerce.elf}.  We actually encode the typings for $e_\text{coerce}$
  as hypothetical derivations in the subtyping judgment itself (\TwelfFile{typeof+sub.elf}), making the
  \Rsub case here trivial.
\end{proof}

\section{Target Language} \Label{sec:target}

Our target language is just the simply-typed call-by-value $\lambda$-calculus
extended with fixed point expressions, products, and sums.

\subsection{Target Syntax}

\begin{figure}[htbp]
  \centering

  \begin{tabular}[t]{rr@{~~}r@{~}lll}
    Target types & $T$ & $\bnfas$ &
                           $\Unit
                           \bnfalt T \arr T
                           \bnfalt T * T
                           \bnfalt T + T$
\\[4pt]
    Typing contexts & $\Tgamma$ & $\bnfas$ &
                          $\cdot  \bnfalt  \Tgamma, x : T$
\\[5pt]
    Target terms  &   $\!\!\!\!M, N$ & $\bnfas$
          &  $x \bnfalt \unit \bnfalt \Lam{x} M \bnfalt M\,N \bnfalt \Fix{x} M$
   \\ && $\bnfalt$ & $\Pair{M_1}{M_2}
                          \bnfalt \Proj{k}{M}$
   \\ && $\bnfalt$ & $\Inj{k}{M}
                     \bnfalt \arrayenvr{
                                                  \Casesumx{M}{x_1}{N_1}
                                                \\
                                                  \Casesumy{x_2}{N_2}}$                          
\\[16pt]
    Target values  &   $W$ & $\bnfas$
          &  $x \bnfalt \unit \bnfalt \Lam{x} M \bnfalt \Pair{W_1}{W_2} \bnfalt \Inj{k}{W}$
  \end{tabular}
  
  \caption{Target types and terms}
  \label{fig:target-syntax}
\end{figure}

The target types and terms (\Figureref{fig:target-syntax}) are completely standard.

\subsection{Target Typing}

The typing rules for the target language (\Figureref{fig:target-typing})
lack any form of subtyping, and are completely standard.

\begin{figure*}[htbp]
  \centering
  
  \judgboxtwelf{\Tgamma \entails M : T}%
      {Target term $M$ has target type $T$}%
      {\zTwelf{typeoftm M T} ~in~ \TwelfFile{typeoftm.elf}}

      \begin{mathpar}
        \Infer{\Typeoftm{var}}
            {}
            {\Tgamma_1, x : T, \Tgamma_2 \entails x : T}
        \and
        \Infer{\Typeoftm{fix}}
            {\Tgamma, x : T \entails M : T}
            {\Tgamma \entails \Fix{x} M : T}
        \and
        \Infer{\Typeoftm{unitintro}}
             {}
             {\Tgamma \entails \unit : \Unit}
        \\
        \Infer{\Typeoftm{arrintro}}
             {\Tgamma, x : T_1 \entails M : T_2}
             {\Tgamma \entails \Lam{x} M : (T_1 \arr T_2)}
         \rulesep
         \Infer{\Typeoftm{arrelim}}
             {\Tgamma \entails  M_1 : T \arr T'
               \\
               \Tgamma \entails  M_2 : T}
             { \Tgamma \entails  M_1\,M_2 : T'}
          \\
          \Infer{\Typeoftm{prodintro}}
              {\Tgamma \entails M_1 : T_1
                \\
                \Tgamma \entails M_2 : T_2}
              {\Tgamma \entails \Pair{M_1}{M_2} : (T_1 * T_2)}
           \rulesep
           \Infer{\Typeoftm{prodelim${}_k$}}
                {\Tgamma \entails M : (T_1 * T_2)}
                {\Tgamma \entails (\Proj{k}{M}) : T_k}
            \\
           \Infer{\Typeoftm{sumintro${}_k$}}
                {\Tgamma \entails M : T_k}
                {\Tgamma \entails (\Inj{k}{M}) : (T_1 + T_2)}
            \rulesep
            \Infer{\Typeoftm{sumelim}}
                 {\Tgamma \entails M : T_1 + T_2
                    \\
                    \arrayenvbl{\Tgamma, x_1 : T_1 \entails  N_1 : T
                        \\
                        \Tgamma, x_2 : T_2 \entails  N_2 : T
                     }
                 }
                 {\Tgamma \entails (\Casesum{M}{x_1}{N_1}{x_2}{N_2}) :  T}
      \end{mathpar}

  \caption{Target type system with functions, products and sums}
  \label{fig:target-typing}
\end{figure*}

\subsection{Target Operational Semantics}

The operational semantics $M \steptm M'$ is, likewise, standard;
functions are call-by-value and products are strict.  As usual, we write
$M \stepstm M'$ for a sequence of zero or more $\steptm$s.

Naturally, a type safety result holds:

\begin{theorem}[Target Type Safety]  \Label{thm:tm-safety}
  If $\cdot \entails M : T$
  then either $M$ is a value, or $M \steptm M'$ and $\cdot \entails M' : T$.
\end{theorem}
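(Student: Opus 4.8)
The plan is to split this combined statement into the two classical lemmas and then recombine them: \emph{progress} (a well-typed closed term is either a value or can take a step) and \emph{preservation} (a step preserves typing). Because the target calculus and its call-by-value, strict-product semantics are entirely standard, I expect both lemmas to go through by routine inductions; none of the source-language merge machinery intrudes here, since merges live only in the source calculus.

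For preservation I would first establish a substitution lemma: if $\Tgamma, x : T' \entails N : T$ and $\Tgamma \entails M : T'$, then $\Tgamma \entails [M/x]N : T$, proved by induction on the derivation of the first premise, with the usual structural property (weakening) supplying the typing of $M$ when we pass under a binder. With substitution available, preservation follows by induction on $\cdot \entails M : T$ together with case analysis on the step $M \steptm M'$. The $\beta$-reduction case applies the substitution lemma directly; the fixed-point case $\Fix{x} M \steptm [(\Fix{x}M)/x]M$ applies it with the (non-value) term $\Fix{x}M$, whose typing is obtained by inverting the fixed-point rule; the projection and case-elimination reductions use inversion on the typing of the corresponding introduction form (pair, injection); and the congruence rules follow immediately from the induction hypothesis.

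For progress I would first prove the canonical forms lemmas: a closed value of type $T_1 \arr T_2$ is a $\lambda$-abstraction, a closed value of $T_1 * T_2$ is a pair, and a closed value of $T_1 + T_2$ is an injection, each by inspecting which value forms can be assigned the given type. Progress then follows by induction on $\cdot \entails M : T$. Variables are impossible in the empty context; the introduction forms are already values (a pair becomes a value once its strict components do, handled by the congruence rules); and for each elimination form---application, projection, and case---either a subterm steps via a congruence rule, or all relevant subterms are values, in which case canonical forms gives the eliminated subterm the required shape ($\lambda$, pair, or injection) and the corresponding reduction fires.

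The two lemmas combine at once: given $\cdot \entails M : T$, progress yields that $M$ is a value or $M \steptm M'$ for some $M'$, and in the latter case preservation gives $\cdot \entails M' : T$, which is exactly the statement. I expect the only mildly delicate bookkeeping to be in the substitution lemma (context manipulation under binders) and in matching the congruence cases to the strict, left-to-right evaluation order; but since typing is insensitive to evaluation order, even these are routine. The genuine subtlety of the paper is confined to the source semantics and to the elaboration relating it to this target, not to this standard target-language result.
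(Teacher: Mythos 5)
Your proposal is correct and matches the paper's approach: the paper proves this by induction on the typing derivation with the standard auxiliary lemmas (substitution, canonical forms), noting only that the substitution lemma comes for free in the Twelf mechanization. Whether one phrases it as a single combined induction (as the paper's statement suggests) or as separate progress and preservation lemmas recombined at the end (as you do) is an inessential packaging difference for this entirely standard target calculus.
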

\vspace{-11pt}
\begin{proof}
  By induction on the given derivation, using a few standard lemmas; see \TwelfFile{tm-safety.elf}.
  (The necessary substitution lemma comes for free in Twelf.)
\end{proof}

\vspace{-3pt}
And to calm any doubts about whether $M$ might step to some
other, not necessarily well-typed term:

\begin{theorem}[Determinism of $\steptm$] \Label{thm:tm-deterministic} ~\\
  If $M \steptm N_1$ and $M \steptm N_2$ then $N_1 = N_2$ (up to $\alpha$-conversion).
\end{theorem}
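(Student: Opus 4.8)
The plan is to prove single-step determinism by a routine induction whose only essential ingredient is the irreducibility of target values. Because the target language is the ordinary call-by-value $\lambda$-calculus with strict products and sums---and, crucially, contains \emph{no} merge construct---every closed non-value has a unique redex in evaluation position under the standard left-to-right call-by-value strategy. The nondeterminism that plagues the source relation $\stepe$ (the \t{unmerge}, \t{merge}, and \t{split} rules) simply has no analogue here, so the whole content of the theorem is ruling out spurious overlaps between the congruence rules and the reduction rules.

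First I would establish an auxiliary lemma: if $W$ is a target value, then there is no $N$ with $W \steptm N$. This follows by induction on $W$ and inspection of the step rules: variables, $\unit$, and $\Lam{x} M$ match no rule at all, while $\Pair{W_1}{W_2}$ and $\Inj{k}{W}$ admit only congruence rules whose premises would require a subvalue to step, contradicting the inner instance of the lemma. With that in hand, I would prove the theorem by induction on the derivation of $M \steptm N_1$ (equivalently, by structural induction on $M$), inverting the derivation of $M \steptm N_2$ in each case and letting the shape of $M$ dictate which rules can fire.

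The representative case is application $M_1\,M_2$. If $M_1$ is not a value, both derivations must use left-congruence (the $\beta$-rule needs a $\Lam{x}{-}$ and right-congruence needs $M_1$ to be a value, both excluded by the irreducibility lemma applied to $M_1$), and the induction hypothesis on $M_1$ closes the case; if $M_1 = \Lam{x} M$ but $M_2$ is not a value, both must use right-congruence and the IH applies to $M_2$; if both are values, only the $\beta$-rule applies and the two results coincide, since capture-avoiding substitution $[W/x]M$ is determined up to $\alpha$-conversion. The cases for $\Proj{k}{M}$, for the strict left-to-right pair $\Pair{M_1}{M_2}$, for $\Inj{k}{M}$, for case-analysis on a sum, and for the fixed-point expression are all analogous: congruence applies exactly when the relevant subterm is a non-value, and the corresponding reduction (projection/case on a value scrutinee, or unrolling of the fixed point) applies exactly when it is a value, so the two are mutually exclusive and each yields a result fixed up to $\alpha$.

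The only ``obstacle,'' such as it is, lies precisely in these overlap cases: one must show that a subterm cannot simultaneously be a value (enabling a reduction rule) and take a step (enabling a congruence rule). That is exactly what the value-irreducibility lemma buys, and once it is available every case is forced. I expect no difficulty beyond the mechanical bookkeeping of the inversions, and the Twelf development records the same case analysis directly.
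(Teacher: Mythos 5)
Your proposal is correct and matches the paper's approach: the paper's proof is simply ``by simultaneous induction'' over the two step derivations (deferred to the Twelf development), which is exactly your induction-plus-inversion argument, with the value-irreducibility lemma doing the work of ruling out overlaps between congruence and reduction rules. The only nitpick is that the $\Fix{x}M$ case is deterministic not because of a value/non-value dichotomy but because the unrolling rule is the sole rule matching that shape; this does not affect the argument.
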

\vspace{-11pt}
\begin{proof}
  By simultaneous induction.  See \zTwelf{tm-deterministic} in \TwelfFile{tm-safety.elf}.
\end{proof}

\begin{figure}[htbp]
\judgboxtwelf{M \steptm M'}%
     {Target term $M$ steps to $M'$}%
     {\!\!\!\!\begin{tabular}{l}  \zTwelf{steptm M M'} \\~in~ \TwelfFile{steptm.elf} \end{tabular}\!\!\!\!}
~\\[-10pt]
  \begin{mathpar}
    \Infer{}
        { M_1 \steptm M_1' }
        { M_1 M_2 \steptm M_1' M_2 }
    \rulesep
    \Infer{}
        { M_2 \steptm M_2'}
        { W_1 M_2  \steptm  W_1 M_2' }
    \vspace{-10pt}
    \\
     \Infer{}
        { }
        { (\Lam{x} M) W  \steptm  [W/x]M }
    \rulesep
    \Infer{}
         { }
         { \Fix{x} M \steptm [(\Fix{x}M) / x]M }
    \vspace{-4pt}
    \\
    \Infer{}
        { M \steptm M'}
        { \Proj{k} {M'} \steptm \Proj{k} {M'} }
    \rulesep
    \Infer{}
        { }
        { \Proj{k} {\Pair{W_1}{W_2}} \steptm W_k }
    \vspace{-5pt}
    \\
    \Infer{}
        {M_1 \steptm M_1'}
        {\Pair{M_1}{M_2} \steptm \Pair{M_1'}{M_2}}
    \rulesep
    \Infer{}
        {M_2 \steptm M_2'}
        {\Pair{W_1}{M_2} \steptm \Pair{W_1}{M_2'}}
    \vspace{-4pt}
    \\
    \Infer{}
         {M \steptm M'}
         {\Inj{k}{M} \steptm \Inj{k}{M'}}
    \and
    \Infer{}
         {M \steptm M'}
         {\Case{M}{MS} \steptm \Case{M'}{MS}}
    \vspace{-9pt}
    \\
    \Infer{}
         {}
         {\Casesum{\Inj{k}{W}}{x_1}{N_1}{x_2}{N_2}~\steptm~[W/x_k]N_k}
    \rulesep
    \Infer{}
         {}
         {}
  \end{mathpar}
  
  \caption[target-opsem]{\!\!\tabularenvl{Target language operational semantics: \\ call-by-value + products + sums}\!\!}
  \FLabel{fig:target-opsem}
  \vspace{-6pt}
\end{figure}

\section{Elaboration Typing}  \Label{sec:elaboration}

We elaborate source expressions $e$ into target terms $M$.  The source expressions,
which include a ``merge'' construct $\Merge{e_1}{e_2}$, are typed with intersections and unions,
but the result of elaboration is completely standard and can be typed with just
$\Unit$, $\arr$, $*$ and $+$.

The elaboration judgment $\Gamma \entails e : A \elto{M}$ is read ``under assumptions $\Gamma$,
source expression $e$ has type $A$ and elaborates to target term $M$''.  
While not written explicitly in the judgment, the elaboration rules ensure
that $M$ has type $\tytrans{A}$, the \emph{type translation} of $A$ (\Figureref{fig:tytrans}).
For example,
$\tytrans{\topty \sectty (\topty{\arr}\topty)} = \Unit * (\Unit{\arr}\Unit)$.

To simplify the technical development, the elaboration rules work only
for source expressions that can be typed without using the subsumption rule \Rsub
(\Figureref{fig:source-typing}).  Such source expressions can always be produced
(\Theoremref{thm:coerce}, above).

The rest of this section discusses the elaboration rules and proves related properties:

\begin{enumerate}
\item[\ref{sec:elab:typing}] connects elaboration, source typing, and target typing;
\item[\ref{sec:elab:relation}] gives lemmas useful for showing that target computations
  correspond to source computations;
\item[\ref{sec:elab:consistency}] states and proves that correspondence (\emph{consistency},
  \Thmref{thm:consistency});
\item[\ref{sec:elab:main}] summarizes the metatheory through two important corollaries
  of our various theorems.
\end{enumerate}

Finally, \Sectionref{sec:value-restriction} discusses whether we need a value restriction on \Rsectintro.

\subsection{Connecting Elaboration and Typing}  \Label{sec:elab:typing}

\paragraph*{Equivalence of elaboration and source typing:}
$\!\!\!$The non-elaborating type assignment system of \Figureref{fig:source-typing},
minus \Rsub, can be read off from the elaboration rules in \Figureref{fig:elaboration}:
simply drop the $\elto{\dots}$ part of the judgment.  Consequently, given
$e : A \elto{M}$ we can always derive $e : A$:

\smallskip

\begin{theorem}  \Label{thm:typeof-elab} ~\\
  If $\Gamma \entails e : A \elto{M}$ then $\Gamma \entails e : A$
  (without using rule \Rsub).
\end{theorem}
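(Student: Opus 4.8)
The plan is to proceed by a straightforward structural induction on the given elaboration derivation $\Dee$ of $\Gamma \entails e : A \elto{M}$. The crucial observation—already flagged in the paragraph preceding the statement—is that the elaboration rules of \Figureref{fig:elaboration} stand in one-to-one correspondence with the subsumption-free source typing rules of \Figureref{fig:source-typing}: erasing the $\elto{\cdots}$ annotation from the conclusion and every premise of an elaboration rule yields \emph{exactly} a source typing rule. So the whole argument is an erasure: define the source derivation $\Dee'$ to have the same shape as $\Dee$, with each elaboration rule replaced by its erased counterpart.

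Concretely, I would first dispatch the leaf cases \Rvar\ and \Rtopintro, where the matching source rule has no premises and the conclusion $\Gamma \entails e : A$ is immediate after erasure. For each inductive case—\Rlam, \Rapp, \Rfix, \Rmerge{k}, \Rsectintro, \Rsectelim{k}, \Runintro{k}, \Rdirect, and \Runelim—I would apply the induction hypothesis to each premise (discarding its target term) to obtain a subsumption-free source derivation, then apply the corresponding source rule to assemble the conclusion. For instance, an elaboration derivation ending in intersection introduction has premises $\Gamma \entails e : A_1 \elto{M_1}$ and $\Gamma \entails e : A_2 \elto{M_2}$; the induction hypothesis yields $\Gamma \entails e : A_1$ and $\Gamma \entails e : A_2$, and \Rsectintro\ concludes $\Gamma \entails e : A_1 \sectty A_2$. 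The cases carrying evaluation contexts, \Rdirect\ and \Runelim, are no harder: the context $\E$ and the decomposition $\E[e_0]$ are shared verbatim between the elaboration and source versions of the rule, so the erased premises feed directly into the matching source rule. The requirement that $\Dee'$ never uses \Rsub\ then comes for free, since the elaboration system has no subsumption rule at all (elaboration being defined only for expressions typeable without \Rsub, per \Theoremref{thm:coerce}), so no case can ever introduce one.

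Because every case is dictated mechanically by the rule correspondence, I do not expect a genuine obstacle. The only thing to check with care is that the erasure map is total and well-defined on rules—that each elaboration rule really does erase to a single, legitimate source rule (and in particular that none erases to \Rsub). Once that bijection is confirmed rule by rule, the induction closes with no residual content; this is also why the Twelf formulation can read the two systems off one another.
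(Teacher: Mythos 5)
Your proof is correct and matches the paper's own argument, which is exactly the same straightforward induction on the elaboration derivation, erasing the $\elto{\cdots}$ component rule by rule. The paper simply states this in one line and defers to the Twelf formalization; your case analysis spells out the same erasure correspondence in more detail.
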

\vspace{-13pt}
\begin{proof}  By straightforward induction on the given derivation;
  see \zTwelf{typeof-erase} in \TwelfFile{typeof-elab.elf}.
\end{proof}

More interestingly, given $e : A$ we can always elaborate $e$, so elaboration
is just as expressive as typing:

\smallskip

\begin{theorem}[Completeness of Elaboration]  \Label{thm:elab-complete} ~\\
  If $\Gamma \entails e : A$ (without using rule \Rsub)
  then
  $\Gamma \entails e : A \elto{M}$.
\end{theorem}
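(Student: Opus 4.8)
The plan is to induct on the structure of the given source derivation $\Dee :: \Gamma \entails e : A$, which by assumption never invokes \Rsub. The key observation---already implicit in the remark preceding \Theoremref{thm:typeof-elab}---is that the source system minus \Rsub and the elaboration system are in exact rule-for-rule correspondence: erasing the $\elto{\cdot}$ annotation from each elaboration rule yields one of the source rules, and every source rule other than \Rsub arises this way. Thus the last rule $R$ of $\Dee$ has a matching elaboration rule $R'$ whose premises are exactly the $\elto{\cdot}$-decorated forms of $R$'s premises. Each premise of $\Dee$ is a strictly smaller source derivation that, being a subderivation of one free of \Rsub, is itself free of \Rsub; so the induction hypothesis applies and supplies an elaboration for it. Applying $R'$ to these elaborated premises then produces the desired $M$ together with the derivation $\Gamma \entails e : A \elto{M}$.

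Walking through the cases confirms that this is routine. The structural rules \Rvar, \Rtopintro, \Rlam, \Rapp, \Rfix and \Rmerge{k} simply thread the elaborated subterms through the corresponding target constructor (for instance \Rlam builds $\Lam{x} M$ from the body's elaboration $M$). The characteristic cases are those where the target term grows to reflect the intersection/union structure: \Rsectintro takes the elaborations $M_1$ of $e : A_1$ and $M_2$ of $e : A_2$ and forms $\Pair{M_1}{M_2}$, of type $\tytrans{A_1 \sectty A_2}$; \Rsectelim{k} inserts $\Proj{k}{M}$; \Runintro{k} inserts $\Inj{k}{M}$; and the elimination rules \Rdirect and \Runelim assemble a case expression $\Casesum{M_0}{x_1}{N_1}{x_2}{N_2}$ out of the elaboration $M_0$ of $e_0$ and the elaborations $N_1, N_2$ of the context-filling premises $\E[x_1], \E[x_2]$.

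I expect no real obstacle here; the proof is the \emph{straightforward} converse of \Theoremref{thm:typeof-elab}, and its only content is that the two rule systems project onto one another. The single point worth checking is that rules mentioning the \emph{same} source expression or evaluation context in several premises---namely \Rsectintro, \Rdirect and \Runelim---do not force those premises to elaborate identically: the induction hypothesis is applied to each premise independently, and it is precisely the product and sum constructors of the target that absorb the resulting divergence in the $M_k$ (respectively $N_k$). For the union elimination one also verifies that the two branches share the answer type $\tytrans{C}$, so that the assembled case term is well-typed; this is immediate from the conclusions of the two premises. The full Twelf formalization accompanies \TwelfFile{typeof-elab.elf}.
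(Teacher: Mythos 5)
Your proof is correct and takes essentially the same approach as the paper, which likewise dispatches this theorem as a straightforward induction on the given derivation, relying on the rule-for-rule correspondence between the non-elaborating system (minus \Rsub) and the elaboration rules. One small slip: \Rdirect elaborates to the application $(\Lam{x} N)\, M_0$ rather than to a case expression---only \Runelim produces $\Casesum{M_0}{x_1}{N_1}{x_2}{N_2}$---but this does not affect the argument.
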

\vspace{-12pt}
\begin{proof}
  By straightforward induction on the given derivation;
  see \zTwelf{elab-complete} in \TwelfFile{typeof-elab.elf}.
\end{proof}

\paragraph*{Elaboration produces well-typed terms:}

\begin{figure}[t]
\centering

  \begin{tabular}{c}
   \begin{array}[t]{rcl}
   \tytrans{\topty} &=& \Unit
\\[2pt]
     \tytrans{A_1 \arr A_2} &=& \tytrans{A_1} \arr \tytrans{A_2}
\\[2pt]
    \tytrans{A_1 \sectty A_2} &=& \tytrans{A_1} * \tytrans{A_2}
\\[2pt]
     \tytrans{A_1 \unty A_2} &=& \tytrans{A_1} + \tytrans{A_2}
   \end{array}    
  \end{tabular}

  \caption{Type translation}
  \FLabel{fig:tytrans}
\end{figure}

Any target term $M$ produced by the elaboration rules has corresponding
target type.  In the theorem statement, we assume the obvious translation $\tytrans{\Gamma}$,
e.g.\ $\tytrans{x\,{:}\,\topty, y\,{:}\,\topty \unty \topty} =
x\,{:}\,\tytrans{\topty}, y\,{:}\,\tytrans{\topty \unty \topty} = x\,{:}\,\Unit, y\,{:}\,\Unit + \Unit$).

\smallskip

\begin{theorem}[Elaboration Type Soundness]   \Label{thm:elab-type-soundness}
  ~\\
  If $\Gamma \entails e : A \elto{M}$
  then
  $\tytrans{\Gamma} \entails M : \tytrans{A}$.
\end{theorem}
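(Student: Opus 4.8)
The plan is to proceed by structural induction on the derivation $\Dee$ of $\Gamma \entails e : A \elto{M}$. The whole argument rests on a single observation: each elaboration rule is the image, under the type translation of \Figureref{fig:tytrans}, of exactly one target typing rule from \Figureref{fig:target-typing}. Because the translation sends $\sectty$ to $*$, $\unty$ to $+$, $\arr$ to $\arr$, and $\topty$ to $\Unit$, applying the induction hypothesis to each premise yields a target derivation whose conclusion is precisely what the matching target rule demands. The context translation is compositional, $\tytrans{\Gamma, x : A} = \tytrans{\Gamma}, x : \tytrans{A}$, so the extended contexts appearing in premises line up automatically. As with \Theoremref{thm:typeof-elab} and \Theoremref{thm:elab-complete}, I expect this to be a straightforward induction.

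Concretely, I would dispatch the cases in three groups. The purely structural rules (\Rvar, \Rtopintro, \Rlam, \Rapp, \Rfix) map to their identically-shaped target counterparts for variables, unit, $\lambda$-abstraction, application, and fixed points; the elaborated term is the syntactically identical $\lambda$-calculus construct, so the induction hypothesis together with the corresponding target rule closes each case immediately. The intersection and union rules are the first genuinely type-directed group. For \Rsectintro, the premises $\Gamma \entails e : A_1 \elto{M_1}$ and $\Gamma \entails e : A_2 \elto{M_2}$ yield, by the induction hypothesis, $\tytrans{\Gamma} \entails M_1 : \tytrans{A_1}$ and $\tytrans{\Gamma} \entails M_2 : \tytrans{A_2}$; since $\tytrans{A_1 \sectty A_2} = \tytrans{A_1} * \tytrans{A_2}$, the target product-introduction rule types the elaborated pair $\Pair{M_1}{M_2}$ at exactly this type. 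Dually, \Rsectelim{k} elaborates to a projection $\Proj{k}{M}$ typed by target product elimination, and \Runintro{k} elaborates to an injection $\Inj{k}{M}$ typed by target sum introduction, using $\tytrans{A_1 \unty A_2} = \tytrans{A_1} + \tytrans{A_2}$. For \Rmerge{k}, the elaborated term is just the elaboration $M$ of the chosen branch $e_k$, so the induction hypothesis on that single premise already delivers $\tytrans{\Gamma} \entails M : \tytrans{A}$ with no further work.

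The one case that demands genuine attention is \Runelim, together with its 1-ary cousin \Rdirect, because of the evaluation context $\E$ and the fresh context variables $x_1, x_2$. Here the induction hypothesis applies to three premises: from $\Gamma \entails e_0 : A_1 \unty A_2 \elto{M_0}$ we obtain $\tytrans{\Gamma} \entails M_0 : \tytrans{A_1} + \tytrans{A_2}$, while each branch premise $\Gamma, x_k : A_k \entails \E[x_k] : C \elto{N_k}$ yields $\tytrans{\Gamma}, x_k : \tytrans{A_k} \entails N_k : \tytrans{C}$. The elaborated term is the case expression $\Casesum{M_0}{x_1}{N_1}{x_2}{N_2}$, which the target sum-elimination rule types at $\tytrans{C}$, exactly as required. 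Notably, because the branch premises already carry the extended contexts, no substitution lemma is needed for this theorem (unlike a preservation argument); the hypotheses land in place directly. The subtlety is therefore purely one of bookkeeping: one must confirm that the fresh variable bound in each source branch is carried through as the bound variable of the corresponding target branch, and that $\tytrans{\cdot}$ commutes with context extension. I expect this to be notationally fussy but not conceptually hard; in the Twelf encoding the binders are higher-order abstract syntax, so the alignment is essentially automatic, and the main obstacle is confined to the context-based rules rather than being a real difficulty.
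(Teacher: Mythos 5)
Your proposal is correct and follows essentially the same route as the paper: a straightforward induction on the elaboration derivation, matching each rule to the corresponding target typing rule under the type translation (the paper additionally notes the Twelf-level trick of identifying source and target types via the bijection, and that the \Rdirect case, whose elaboration is a $\beta$-redex $(\Lam{x}N)\,M_0$ rather than a case expression, is closed by \typeoftm{arrintro} followed by \typeoftm{arrelim}). No substantive difference.
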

\vspace{-12pt}
\begin{proof}
  By induction on the given derivation.  For example, the case for \Rdirect,
  which elaborates to an application, applies \typeoftm{arrintro} and \typeoftm{arrelim}.
  Exploiting a bijection
  between source types and target types, we actually prove $\Gamma \entails M : A$,
  interpreting $A$ and types in $\Gamma$ as target types: $\sectty$ as $*$, etc.
  See \TwelfFile{elab-type-soundness.elf}.
\end{proof}

\begin{figure*}[htbp]
  \centering

\judgboxtwelf{\Gamma \entails e : A \elto M}%
     {Source expression $e$ has source type $A$ \\
      and elaborates to target term $M$ (of type $\tytrans{A}$)}%
     {\zTwelf{elab E A M} ~in~ \TwelfFile{elab.elf}}

  \begin{mathpar}
    \Infer{\Rvar}
         {}
         {\Gamma_1, x : A, \Gamma_2 \entails x : A \elto{x}}
    ~~~~
    \Infer{\Rmerge{k}}
        {\Gamma \entails   e_k : A \elto {M}}
        {\Gamma \entails \Merge{e_1}{e_2} : A  \elto{M}}
     \and
     \Infer{\Rfix}
          {\Gamma, x : A \entails e : A \elto {M}}
          {\Gamma \entails \Fix{x} e : A \elto {\Fix{x} M}}
    \and
    \Infer{\Rtopintro}
          {}
          {\Gamma \entails v : \topty \elto {\unit}}
    \and
    \Infer{\Rlam}
          {\Gamma, x : A \entails e : B \elto {M}}
          {\Gamma \entails \Lam{x} e : A \arr B \elto {\Lam{x} M}}
    \rulesep
    \Infer{\Rapp}
         {\Gamma \entails e_1 : A \arr B \elto {M_1}
          \\
          \Gamma \entails e_2 : A \elto {M_2}
         }
         {\Gamma \entails e_1\,e_2 : B \elto {M_1\,M_2}}
    \\
    \Infer{\Rsectintro}
        {\Gamma \entails e : A_1 \elto {M_1}   \\   \Gamma \entails e : A_2 \elto {M_2}}
        {\Gamma \entails e : A_1 \sectty A_2  \elto {\Pair{M_1}{M_2}}}
    \rulesep
    \Infer{\Rsectelim{k}}
        {\Gamma \entails e : A_1 \sectty A_2 \elto{M}}
        {\Gamma \entails e : A_k  \elto  {\Proj{k}{M}}}
    \\
    \Infer{\Runintro{k}}
        {\Gamma \entails e : A_k \elto{M}}
        {\Gamma \entails e : A_1 \unty A_2  \elto  {\Inj{k}{M}}}
    \vspace{-6pt}
    \\
    \Infer{\Rdirect}
        {\Gamma \entails e_0 : A  \elto{M_0}
         \\
          \Gamma, x : A \entails \E[x] : C \elto {N}  }
        {\Gamma \entails \E[e_0] : C  \elto { (\Lam{x} N) M_0 }}
    \rulesep
    \Infer{\Runelim}
        {\Gamma \entails e_0 : A_1 \unty A_2  \elto{M_0}
         ~~~~~~
         \arrayenvbl{\Gamma, x_1 : A_1 \entails \E[x_1] : C  \elto {N_1}
            \\
            \Gamma, x_2 : A_2 \entails \E[x_2] : C \elto {N_2}}}
        {\Gamma \entails \E[e_0] : C  \elto { \Casesum{M_0}{x_1}{N_1}{x_2}{N_2} }}
  \end{mathpar}      

  \caption{Elaboration typing rules}
  \FLabel{fig:elaboration}
\end{figure*}

\subsection{Relating Source Expressions to Target Terms} \Label{sec:elab:relation}

Elaboration produces a term that corresponds closely to the source expression:
a target term is the same as a source expression, except that the intersection- and union-related
aspects of the computation become explicit in the target.  For instance, intersection elimination via \Rsectelim{2},
implicit in the source program, becomes the explicit projection $\xSnd$.
The target term has nearly the same structure as the
source; the elaboration rules only insert operations such as $\xSnd$,
duplicate subterms such as the $e$ in \Rsectintro, and omit unused parts of merges.

This gives rise to
a relatively simple connection between source expressions and target terms---much
simpler than a logical relation, which relates all appropriately-typed terms that have
the same extensional behaviour.
In fact, stepping in the target \emph{preserves elaboration typing}, provided we are 
allowed to step the source expression zero or more times.  This consistency result,
\Theoremref{thm:consistency}, needs several lemmas.

\begin{lemma}  \Label{lem:step-eval-context}
  If $e \stepse e'$ then $\E[e] \stepse \E[e']$.
\end{lemma}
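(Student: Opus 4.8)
The plan is to prove this in two stages: first a \emph{single-step} congruence, and then lift it to the reflexive--transitive closure $\stepse$.

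The key lemma is the single-step version: if $e \stepe e'$ then $\E[e] \stepe \E[e']$. I would prove this by structural induction on the evaluation context $\E$, following its grammar $\E \bnfas \hole \bnfalt \E\;e \bnfalt v\;\E \bnfalt \Merge{\E}{e} \bnfalt \Merge{e}{\E}$. When $\E = \hole$ we have $\E[e] = e \stepe e' = \E[e']$ directly from the hypothesis. In each remaining case, the induction hypothesis gives that the immediate subcontext steps, and exactly one congruence rule of the source operational semantics lifts that step into the enclosing form: $\E\;e$ uses `step/app1', $v\;\E$ uses `step/app2' (whose value restriction on the head is satisfied because the context places a value $v$ there), $\Merge{\E}{e}$ uses `step/merge1', and $\Merge{e}{\E}$ uses `step/merge2'. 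Each case is a one-line application of the corresponding rule to the induction hypothesis.

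With the single-step congruence in hand, the stated lemma follows by induction on the length of the derivation $e \stepse e'$. The base case is the empty sequence, where $e = e'$ and hence $\E[e] = \E[e']$, so $\E[e] \stepse \E[e']$ holds in zero steps. For the inductive case, decompose the sequence as $e \stepe e_1 \stepse e'$; the single-step congruence yields $\E[e] \stepe \E[e_1]$, the induction hypothesis gives $\E[e_1] \stepse \E[e']$, and composing these produces $\E[e] \stepse \E[e']$.

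There is no real obstacle here, as this is a routine compatibility property. The only point requiring a moment's care is checking that the evaluation-context grammar and the congruence rules of $\stepe$ line up --- in particular that the value restriction in `step/app2' matches the $v\;\E$ production, and that `step/merge2' is deliberately stated \emph{without} a value restriction on its left component, so that the $\Merge{e}{\E}$ context (whose left side is an arbitrary expression) can still take a step under the induction. Once this correspondence is confirmed, both inductions go through mechanically, which is why the Twelf encoding of this lemma is short.
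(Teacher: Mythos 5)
Your proposal matches the paper's proof: the paper likewise establishes a single-step congruence lemma ($e \stepe e'$ implies $\E[e] \stepe \E[e']$, proved case-by-case over the evaluation-context grammar) and then lifts it to $\stepse$ by induction on the number of steps. The case analysis and the observation about the rules lining up with the context grammar are exactly what the Twelf development does, so the proposal is correct and essentially identical in approach.
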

\begin{proof}  By induction on the number of steps, using a lemma (\zTwelf{step-eval-context})
  that $e \stepe e'$ implies $\E[e] \stepe \E[e']$.
  See \zTwelf{step*eval-context} in \TwelfFile{step-eval-context.elf}.
\end{proof}

Next, we prove inversion properties of unions, intersections and arrows.  Roughly, we want
to say that if an expression of union type elaborates to an injection $\Inj{k}{M_0}$, it also elaborates
to $M_0$.  For intersections, the property is slightly more complicated: given an expression
of intersection type that elaborates to a pair, we can step the expression to get something
that elaborates to the components of the pair.  Similarly, given an expression of arrow type
that elaborates to a $\lambda$-abstraction, we can step the expression to a $\lambda$-abstraction.

\begin{lemma}[Unions/Injections] ~\\  \Label{lem:elab-union}
  If $\Gamma \entails e : A_1 \unty A_2 \elto{\Inj{k}{M_0}}$
  then $\Gamma \entails e : A_k \elto{M_0}$.
\end{lemma}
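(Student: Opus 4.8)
The plan is to proceed by induction on the given elaboration derivation $\Gamma \entails e : A_1 \unty A_2 \elto{\Inj{k}{M_0}}$, with a case analysis on its last rule. The observation that drives the entire argument is syntactic: every rule in \Figureref{fig:elaboration} fixes the head constructor of the target term it produces. Scanning the rules, the only ones that can emit a term of the form $\Inj{k}{M_0}$ are \Runintro{i} and \Rmerge{i}. Each remaining rule produces a target with a visibly different head---$x$ for \Rvar, $\unit$ for \Rtopintro, $\Lam{x} M$ for \Rlam, an application for \Rapp and \Rdirect, $\Pair{M_1}{M_2}$ for \Rsectintro, $\Proj{i}{M}$ for \Rsectelim{i}, $\Fix{x} M$ for \Rfix, and a case expression for \Runelim---so all of those cases are vacuous, no matter what source type they happen to conclude (in particular, \Rsectelim{i}, \Rdirect, and \Runelim can conclude a union type, yet their targets are still not injections).

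The base case is \Runintro{i}. Its conclusion reads $e : B_1 \unty B_2 \elto{\Inj{i}{M}}$, and matching it against the goal forces $B_1 = A_1$, $B_2 = A_2$, $M = M_0$, and---crucially---$i = k$, since equating $\Inj{i}{M}$ with $\Inj{k}{M_0}$ pins down the injection tag. The single premise is then exactly $\Gamma \entails e : B_i \elto{M}$, i.e.\ $\Gamma \entails e : A_k \elto{M_0}$, which is precisely what we must establish. The needed agreement between the injection tag and the selected union component is thus built directly into \Runintro{i}, leaving nothing to reconcile.

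The only genuinely inductive case is \Rmerge{i}, where $e = \Merge{e_1}{e_2}$ and the premise is $\Gamma \entails e_i : A_1 \unty A_2 \elto{\Inj{k}{M_0}}$. Because \Rmerge{i} passes both the type and the target term of its chosen branch through unchanged, this premise carries the same union type and the same injection target as the conclusion, so the induction hypothesis applies and yields $\Gamma \entails e_i : A_k \elto{M_0}$. Reapplying \Rmerge{i} to this derivation gives $\Gamma \entails \Merge{e_1}{e_2} : A_k \elto{M_0}$, that is, $\Gamma \entails e : A_k \elto{M_0}$.

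The main subtlety---and the reason this is an induction rather than a single inversion step---is exactly this merge case: since merges nest, a derivation may stack several \Rmerge{i} steps above the \Runintro{k} that actually introduces the injection, and the induction hypothesis is what lets us peel those off. The fact that \Rmerge{i} is \emph{type-agnostic}, imposing no constraint tying its type to any particular connective, is precisely what permits re-deriving the conclusion at the refined type $A_k$.
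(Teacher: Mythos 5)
Your proof is correct and follows essentially the same route as the paper's: induction on the elaboration derivation, observing that only \Runintro{k} and \Rmerge{k} can produce a target term of the form $\Inj{k}{M_0}$, with the merge case handled by the induction hypothesis and re-application of \Rmerge{k}. The paper states exactly this (deferring details to the Twelf development), so your write-up is just a more explicit rendering of the same argument.
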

\begin{proof}
  By induction on the derivation of $\Gamma \entails e : C \elto{M}$.  The
  only possible cases are \Rmerge{k} and \Runintro{k}.
  See \zTwelf{elab-inl} and \zTwelf{elab-inr} in \TwelfFile{elab-union.elf}.
\end{proof}

\begin{lemma}[Intersections/Pairs] ~\\   \Label{lem:elab-sect}
  If $\Gamma \entails e : A_1 \sectty A_2 \elto{\Pair{M_1}{M_2}}$ \\
  $~$then there exist $e_1'$ and $e_2'$ such that

  \begin{enumerate}[(1)]
  \item $e \stepse e_1'$ and $\Gamma \entails e_1' : A_1 \elto{M_1}$, and
  \item  $e \stepse e_2'$ and $\Gamma \entails e_2' : A_2 \elto{M_2}$.
  \end{enumerate}
\end{lemma}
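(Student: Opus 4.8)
The plan is to argue by induction on the given elaboration derivation $\Dee$ of $\Gamma \entails e : A_1 \sectty A_2 \elto{\Pair{M_1}{M_2}}$, mirroring the proof of the Unions/Injections lemma just above. The crucial first step is to inspect the elaboration rules of \Figureref{fig:elaboration} and determine which of them can conclude a judgment whose target term has the pair shape $\Pair{M_1}{M_2}$. Every rule except \Rsectintro and \Rmerge{k} forces the elaborated term into an incompatible syntactic shape: \Rvar yields a variable, \Rlam a $\lambda$-abstraction, \Rapp and \Rdirect an application, \Rtopintro the constant $\unit$, \Rfix a fixed point, \Rsectelim{k} a projection $\Proj{k}{M}$, \Runintro{k} an injection $\Inj{k}{M}$, and \Runelim a case expression. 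None of these is a pair, so only \Rsectintro and \Rmerge{k} can conclude $\Dee$, and these are the only two cases to treat.

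In the \Rsectintro case the derivation ends in premises $\Gamma \entails e : A_1 \elto{M_1}$ and $\Gamma \entails e : A_2 \elto{M_2}$, which are exactly the judgments demanded by the conclusion. I would simply take $e_1' = e_2' = e$, so that both $e \stepse e_1'$ and $e \stepse e_2'$ hold in zero steps. No source reduction is needed here.

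The \Rmerge{k} case is where the source reductions genuinely enter. Here $e = \Merge{e_1}{e_2}$ and the single premise is $\Gamma \entails e_k : A_1 \sectty A_2 \elto{\Pair{M_1}{M_2}}$, a judgment of exactly the form covered by the induction hypothesis. Applying the IH to it yields $e_1'$ and $e_2'$ with $e_k \stepse e_1'$, $\Gamma \entails e_1' : A_1 \elto{M_1}$, and symmetrically $e_k \stepse e_2'$, $\Gamma \entails e_2' : A_2 \elto{M_2}$. To lift these reductions from $e_k$ up to $e = \Merge{e_1}{e_2}$, I would prepend the single demonic step $\Merge{e_1}{e_2} \stepe e_k$ --- rule `step/unmerge left' when $k = 1$ and `step/unmerge right' when $k = 2$ --- obtaining $e \stepe e_k \stepse e_1'$ and $e \stepe e_k \stepse e_2'$, i.e.\ the required $e \stepse e_1'$ and $e \stepse e_2'$.

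I expect no serious obstacle: the substance is the case analysis showing that a pair elaboration can arise only from \Rsectintro or be threaded through by \Rmerge{k}, together with the fact that $\stepse$ is the reflexive--transitive closure of $\stepe$, so single unmerge steps may be prepended. The one conceptually noteworthy point --- and the reason this lemma, unlike the Unions/Injections lemma, must produce two separate witnesses $e_1'$ and $e_2'$ together with source reductions, rather than a single judgment about $e$ itself --- is precisely the \Rmerge{k} case: a merge elaborates to only \emph{one} of its components' target terms, so exposing both $M_1$ and $M_2$ as elaborations at types $A_1$ and $A_2$ may require stepping through (possibly nested) merges, which is exactly what the ``zero or more steps'' of $\stepse$ accommodates.
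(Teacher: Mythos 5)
Your proposal is correct and matches the paper's proof, which likewise proceeds by induction on the elaboration derivation, observes that only \Rsectintro and \Rmerge{k} can produce a target term of pair shape, takes $e_1' = e_2' = e$ in the \Rsectintro case, and prepends a `step/unmerge' step in the \Rmerge{k} case. The paper states this only as a one-line sketch (deferring details to the Twelf file), so your write-up is essentially a correct expansion of the same argument.
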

\vspace{-9pt}
\begin{proof}
  By induction on the given derivation; the only possible cases are \Rsectintro
  and \Rmerge.  See \TwelfFile{elab-sect.elf}.
\end{proof}

\begin{lemma}[Arrows/Lambdas]  \Label{lem:elab-arr} ~\\%
  If $\cdot \entails e : A \arr B \elto{\Lam{x} M_0}$
  then there exists $e_0$ \\ such that $e \stepe^* \Lam{x} e_0$
  and $x : A \entails e_0 : B \elto{M_0}$.
\end{lemma}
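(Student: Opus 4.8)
The plan is to induct on the given elaboration derivation of $\cdot \entails e : A \arr B \elto{\Lam{x} M_0}$, following the same pattern as the Intersections/Pairs lemma (\Lemmaref{lem:elab-sect}). The crucial first step is a syntactic inversion on the shape of the target term. Among the rules of \Figureref{fig:elaboration}, the only ones whose conclusion can emit a $\lambda$-abstraction $\Lam{x} M_0$ are \Rlam, which produces $\Lam{x} M$ directly, and \Rmerge{k}, which is transparent and passes through whatever term its chosen premise elaborates to. Every other rule emits a target with a different head symbol --- \Rvar a variable, \Rfix a $\Fix{x} M$, \Rtopintro the constant $\unit$, \Rapp and \Rdirect an application, \Rsectintro a pair, \Rsectelim{k} a projection, \Runintro{k} an injection, and \Runelim a case expression --- so none of them can apply. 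Because the elaboration system carries no subsumption, the correspondence between rules and target shapes is rigid, and this inversion cleanly reduces the analysis to two cases.

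In the base case \Rlam, the source expression is already of the form $e = \Lam{x} e_0$, and the rule's premise directly gives $x : A \entails e_0 : B \elto{M_0}$; I would take this very $e_0$ and observe that $e \stepe^* \Lam{x} e_0$ holds in zero steps by reflexivity. In the inductive case \Rmerge{k}, we have $e = \Merge{e_1}{e_2}$ with subderivation $\cdot \entails e_k : A \arr B \elto{\Lam{x} M_0}$, to which the induction hypothesis applies, yielding an $e_0$ with $e_k \stepe^* \Lam{x} e_0$ and $x : A \entails e_0 : B \elto{M_0}$. Prepending one `step/unmerge' step ($\Merge{e_1}{e_2} \stepe e_k$, choosing left or right according to $k$) gives $e \stepe^* \Lam{x} e_0$, and the elaboration typing carries over verbatim. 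The context remains $\cdot$ throughout, since \Rmerge{k} preserves it and \Rlam terminates the recursion, so no generalization to open contexts is required for the stated form.

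The point most easily overlooked --- more a pitfall than a genuine obstacle --- is the \Rfix case: operationally $\Fix{x} e$ can reduce to a $\lambda$-abstraction, so one might expect to have to reason about it, but it is excluded purely syntactically, because \Rfix elaborates to a target $\Fix{x} M$ and never to a bare $\Lam{x} M_0$. The only other thing to verify is that $\stepe^*$ is closed under prepending a single $\stepe$-step, which is immediate from its definition as the reflexive--transitive closure of $\stepe$. I would record the mechanized version alongside \TwelfFile{elab-sect.elf}, of which this is the arrow analogue.
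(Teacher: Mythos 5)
Your proposal is correct and follows essentially the same route as the paper: induction on the elaboration derivation, observing that only \Rlam and \Rmerge{k} can produce a target of the form $\Lam{x} M_0$, with the base case taking zero steps and the merge case prepending a `step/unmerge' step before applying the induction hypothesis. The paper's proof is exactly this (stated in one line), so nothing further is needed.
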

\vspace{-9pt}
\begin{proof}
  By induction on the given derivation; the only possible cases are \Rarrintro and \Rmerge.
  See \TwelfFile{elab-arr.elf}.
\end{proof}

Our last interesting lemma shows that if an expression $e$ elaborates to a target value $W$,
we can step $e$ to some value $v$ that also elaborates to $W$.

\begin{lemma}[Value monotonicity]  \Label{lem:value-mono}
  If $\Gamma \entails e : A \elto{W}$ then
  $e \stepse v$ where $\Gamma \entails v : A \elto{W}$.
\end{lemma}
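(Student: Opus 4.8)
The plan is to induct on the derivation of $\Gamma \entails e : A \elto{W}$, using the hypothesis that $W$ is a \emph{target value} to prune the case analysis. The terms produced by \Rapp, \Rsectelim{k}, \Rfix, \Rdirect, and \Runelim are respectively an application, a projection, a fixed point, an application, and a case expression, none of which is a syntactic value, so none of these rules can conclude the derivation. The only possible final rules are therefore \Rvar, \Rtopintro, \Rlam, \Runintro{k}, \Rmerge{k}, and \Rsectintro. In the first three, $e$ is already a source value, so I take $v = e$ with zero steps and reuse the given derivation; for \Rtopintro this relies on its premise restricting the subject to a value.

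For the remaining cases I apply the induction hypothesis to strictly smaller subderivations. In the \Rmerge{k} case, where $e = \Merge{e_1}{e_2}$ with premise $\Gamma \entails e_k : A \elto{W}$, the induction hypothesis gives $e_k \stepse v_k$ with $\Gamma \entails v_k : A \elto{W}$; prepending the matching `step/unmerge' step yields $\Merge{e_1}{e_2} \stepse v_k$, so $v = v_k$. In the \Runintro{k} case, where $W = \Inj{k}{W_0}$ with premise $\Gamma \entails e : A_k \elto{W_0}$, the induction hypothesis gives $e \stepse v_0$ with $\Gamma \entails v_0 : A_k \elto{W_0}$, and re-applying \Runintro{k} re-establishes both the type $A_1 \unty A_2$ and the elaboration $\Inj{k}{W_0}$, so $v = v_0$.

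The interesting case, which I expect to be the main obstacle, is \Rsectintro: here $W = \Pair{W_1}{W_2}$, and the two premises $\Gamma \entails e : A_1 \elto{W_1}$ and $\Gamma \entails e : A_2 \elto{W_2}$ share the \emph{same} source expression $e$. The induction hypothesis supplies two, possibly distinct, source values $v_1$ and $v_2$ with $e \stepse v_1$ and $e \stepse v_2$. The plan is to exhibit the single value $v = \Merge{v_1}{v_2}$: first use the spontaneous-fission rule `step/split' to obtain $e \stepe \Merge{e}{e}$, then reduce the two copies independently by instantiating the evaluation-context lemma (Lemma~\ref{lem:step-eval-context}) with $\Merge{\hole}{e}$ to get $\Merge{e}{e} \stepse \Merge{v_1}{e}$, and then with $\Merge{v_1}{\hole}$ to get $\Merge{v_1}{e} \stepse \Merge{v_1}{v_2}$. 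Since $v_1$ and $v_2$ are source values, so is $\Merge{v_1}{v_2}$. Finally, \Rmerge{1} and \Rmerge{2} lift the two induction-hypothesis derivations to $\Gamma \entails \Merge{v_1}{v_2} : A_1 \elto{W_1}$ and $\Gamma \entails \Merge{v_1}{v_2} : A_2 \elto{W_2}$, and \Rsectintro combines these into $\Gamma \entails \Merge{v_1}{v_2} : A_1 \sectty A_2 \elto{\Pair{W_1}{W_2}}$. The delicate point is that the two reduction sequences, which may genuinely differ since $v_1$ and $v_2$ need not be equal, must be threaded through distinct halves of one merge without interfering; this is exactly where the permissiveness of the source semantics is essential, since `step/split' duplicates the non-value $e$ and `step/merge2' reduces the right component even after the left has already become the value $v_1$. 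Routing this composition through Lemma~\ref{lem:step-eval-context} rather than manipulating the step sequences by hand keeps the argument clean.
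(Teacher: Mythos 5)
Your proof is correct and follows essentially the same route as the paper's: the key \Rsectintro case is handled identically (apply the induction hypothesis to both premises, use `step/split' to duplicate $e$, then reduce the two halves of the merge independently via the merge congruence rules to reach $\Merge{v_1}{v_2}$, and retype with \Rmerge{1}, \Rmerge{2}, and \Rsectintro), and the \Rmerge{k} case likewise matches. The paper only sketches these two cases and defers the rest to Twelf; your pruning of the impossible cases by the syntactic shape of $W$ and the remaining easy cases are the expected fill-in.
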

\vspace{-9pt}
\begin{proof}
  By induction on the given derivation.  

  The most interesting case is for \Rsectintro, where we apply the induction hypothesis to
  each premise (yielding $v_1', v_2'$ such that $e \stepse v_1'$ and $e \stepse v_2'$),
  apply the `step/split' rule to turn $e$ into $(\Merge{e}{e})$,
  and use the `step/merge1' and `step/merge2' rules to step each part of the merge,
  yielding $\Merge{v_1'}{v_2'}$, which is a value.

  In the \Rmerge{k} case on a merge $\Merge{e_1}{e_2}$, we apply the induction
  hypothesis to $e_k$, giving $e_k \stepse v$.  By rule `step/unmerge', $\Merge{e_1}{e_2} \stepe e_k$,
  from which $\Merge{e_1}{e_2} \stepse v$.

  See \TwelfFile{value-mono.elf}.
\end{proof}

\begin{comment}
        \begin{lemma}[Repetition (function part of application)] \Label{lem:repetition-fun}
          If $e_1 \stepe^* v_1$ then $e_1\;e_2 \stepe^* v_1\;e_2$.
        \end{lemma}
        \begin{proof}
          By induction on the number of steps in $e_1 \stepe^* v_1$, using the appropriate evaluation rule.
        \end{proof}
\end{comment}

\begin{lemma}[Substitution] \Label{lem:subst-elab}
 \raggedright If $\Gamma, x : A \entails e : B \elto M$ and $\Gamma \entails v : A \elto{W}$
 then $\Gamma \entails [v/x]e : B \elto{[W/x]M}$.
\end{lemma}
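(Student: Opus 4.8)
The plan is to proceed by structural induction on the derivation $\Dee$ of $\Gamma, x : A \entails e : B \elto{M}$, treating the derivation of $\Gamma \entails v : A \elto{W}$ as a fixed side input that is threaded (with weakening) into the induction hypothesis. In each case I push the substitutions $[v/x](\cdot)$ and $[W/x](\cdot)$ through the principal connective and appeal to the induction hypothesis on the subderivations.

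The routine cases are the variable and congruence-style rules. For \Rvar on a variable $y$ I split on whether $y = x$: if so, then $B = A$ and $M = x$, so $[v/x]x = v$ and $[W/x]x = W$, and the goal is exactly the second hypothesis $\Gamma \entails v : A \elto{W}$; if $y \neq x$, both substitutions are the identity and \Rvar reapplies over $\Gamma$. For \Rapp, \Rsectintro, \Rsectelim{k}, \Runintro{k}, and \Rmerge{k}, substitution distributes over the subterms (e.g.\ $[v/x](e_1\,e_2) = ([v/x]e_1)\,([v/x]e_2)$, and likewise for $M$), so I apply the induction hypothesis to each premise and reassemble with the same rule. For the binders \Rlam and \Rfix I first $\alpha$-rename the bound variable to be fresh for $x$, $v$, and $W$, weaken the $v : A \elto{W}$ derivation to the extended context, then apply the induction hypothesis to the body, using that substitution commutes with the binder.

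The case \Rtopintro requires a small auxiliary fact: since its premise insists that $e$ be a source value, I need that $[v/x]e$ is again a value. This follows by an easy structural check on the value grammar---each of the shapes $x$, $\unit$, $\Lam{x}e$, and $\Merge{v_1}{v_2}$ is closed under substitution of a value. The genuinely delicate cases are \Rdirect and \Runelim, where $e$ decomposes as $\E[e_0]$. Here I need two things: that substitution commutes with hole-filling, $[v/x](\E[e_0]) = ([v/x]\E)[\,[v/x]e_0\,]$, and---crucially---that $[v/x]\E$ is still an evaluation context. This second point is the main obstacle, because the $v_1\,\E$ production of evaluation contexts demands that the function part be a \emph{value}; preserving that shape under substitution relies on exactly the same auxiliary fact that a value substituted into a value stays a value. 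I also $\alpha$-rename the hole-binding variables $x_1, x_2$ in the premises typing $\E[x_k]$ apart from $x$ so the substitutions do not collide, after which the induction hypothesis applies to each premise $\Gamma, x_k : A_k \entails \E[x_k] : C \elto{N_k}$ and the elaborated case term (respectively, the application in \Rdirect) is reassembled from the substituted pieces.

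Everything beyond the evaluation-context reasoning is bookkeeping that, as elsewhere in this development, Twelf discharges essentially for free through its built-in treatment of substitution and hypothetical judgments; the only genuinely new ingredient is the closure of values and of evaluation contexts under value-substitution, which I would isolate as an explicit sublemma.
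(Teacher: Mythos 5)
Your proof is correct and follows the paper's stated strategy---induction on the first derivation of $\Gamma, x : A \entails e : B \elto{M}$; the paper discharges the whole lemma in one line because Twelf's higher-order abstract syntax representation makes substitution lemmas hold essentially by construction. The auxiliary facts you isolate (closure of values and of evaluation contexts under substitution of a value, and substitution commuting with hole-filling) are exactly the bookkeeping that the mechanization absorbs, so spelling them out is the right move for a pen-and-paper version.
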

\vspace{-9pt}
\begin{proof}
  By induction on the first derivation.  As usual, Twelf gives us this substitution lemma for free.
\end{proof}

\subsection{Consistency} \Label{sec:elab:consistency}

This theorem is the linchpin: given $e$ that elaborates to $M$, we can preserve
the elaboration relationship even after stepping $M$, though we may have to step
$e$ some number of times as well.  The expression $e$ and term $M$, in general,
step at different speeds:

\begin{itemize}
\item $M$ steps while $e$ doesn't---for example, if $M$ is $\Proj{1}{\Pair{W_1}{W_2}}$
  and steps to $W_1$, there is nothing to do in $e$ because the projection corresponds to
  the \emph{implicit} elimination in rule \Rsectelim{1};
\item $e$ may step \emph{more} than $M$---for example, if $e$ is $(\Merge{v_1}{v_2})\,v$ and
  $M$ is $(\Lam{x}x)\,W$, then $M$ $\beta$-reduces to $W$, but $e$ must first `step/unmerge'
  to the appropriate $v_k$, yielding $v_k\,v$, and \emph{then} apply `step/beta'.
\end{itemize}
(Note that the converse---if $e \stepe e'$ then $M \stepstm M'$---does not hold:
we could pick the wrong half of a merge and get a source expression with no
particular relation to $M$.)

\begin{theorem}[Consistency]  \Label{thm:consistency} ~\\
  If
  $\cdot \entails e : A \elto{M}$
  and $M \steptm M'$
  \\
  then there exists $e'$ such that $e \stepe^* e'$ and
  $\cdot \entails e' : A \elto{M'}$.
\end{theorem}
\begin{proof}  By induction on the derivation $\Dee$ of $\cdot \entails e : A \elto{M}$.
  We show several cases here; the full proof is in \TwelfFile{consistency.elf}.

  \begin{itemize}
      \ProofCaseRule{\Rvar, \Rtopintro, \Rlam}  Impossible because $M$ cannot step.

      \smallskip

      \DeeProofCaseRule{\Rsectintro}
           {\cdot \entails e : A_1 \elto{M_1}
             ~~&~~
             \cdot \entails e : A_2 \elto{M_2}}
           {\cdot \entails e : A_1 \sectty A_2  \elto{\Pair{M_1}{M_2}}}
           
           By inversion, either $M_1 \steptm M_1'$ or $M_2 \steptm M_2'$.
           Suppose the former (the latter is similar).  By i.h.,
           $e \stepe^* e_1'$  %
           and
           $\cdot \entails e_1' : A_1 \elto{M_1'}$.  %
           By `step/split', $e \stepe \Merge{e}{e}$.
           Repeatedly applying `step/merge1' gives
           $\Merge{e}{e} \stepe^* \Merge{e_1'}{e}$.
           
           For typing, apply \Rmerge{1} with premise $\cdot \entails e_1' : A_1 \elto{M_1'}$
           and with premise $\cdot \entails e : A_2 \elto{M_2}$.

           Finally, by \Rsectintro, we have
           $\cdot \entails \Merge{e_1'}{e} : A_1 \sectty A_2 \elto{\Pair{M_1'}{M_2}}$.

           \smallskip

      \DeeProofCaseRule{\Rsectelim{k}}
          {\cdot \entails e  : A_1 \sectty A_2 \elto{M_0}}
          {\cdot \entails     e  :  A_k  \elto{\Proj{k}{M_0}}}

          If $\Proj{k}{M_0} \steptm \Proj{k}{M_0'}$ with $M_0 \steptm M_0'$, use the i.h.\ and
          apply \Rsectelim{k}.

          If $M_0 = \Pair{W_1}{W_2}$ and $\Proj{k}{M_0} \steptm W_k$,
          use \Lemmaref{lem:elab-sect}, yielding $e \stepse e_k'$ and
          $\Gamma \entails e_k' : A_k \elto{W_k}$.

          \smallskip

      \DeeProofCaseRule{\Rmerge{k}}
           {\cdot \entails e_k : A \elto{M}}
           {\cdot \entails \Merge{e_1}{e_2}  :  A  \elto{M}}

         By i.h., $e_k \stepse e'$ and $\cdot \entails e' : A$.
         By rule `step/unmerge', $\Merge{e_1}{e_2} \stepe e_k$.
         Therefore $\Merge{e_1}{e_2} \stepse e'$.

         \smallskip

      \DeeProofCaseRule{\Rapp}
         {\cdot \entails e_1 : A{\arr}B \elto{M_1}
           ~~&~
           \cdot \entails e_2 : A \elto {M_2}}
         {\cdot \entails     e_1\;e_2  :  B   \elto{M_1\;M_2}}

         We show one of the harder subcases (\zTwelf{consistency/app/beta}
         in \TwelfFile{consistency.elf}).  In this subcase, $M_1 = \Lam{x} M_0$ and
         $M_2$ is a value, with $M_1\,M_2 \steptm [M_2/x]M_0$.  We use several
         easy lemmas about stepping; for example, \zTwelf{step*app1} says that if
         $e_1 \stepe^* e_1'$ then $e_1\,e_2 \stepe^* e_1'\,e_2$.

         \smallskip

           \begin{llproof}
             Elab1 \derives \Pf{}{}{\hspace{-52pt}\cdot \entails e_1 : A \arr B \elto{\Lam{x} M_0}}   {Subd.}
             ElabBody \derives \ePf{x:A} {e_0 : B \elto{M_0}}  {By \Lemmaref{lem:elab-arr}}
             StepsFun \derives \Pf{e_1}{\stepe^*}{\Lam{x} e_0}  {\ditto}
             StepsApp \derives \Pf{~e_1\;e_2}{\stepe^*}{(\Lam{x} e_0) e_2}  {By \zTwelf{step*app1}}
             \proofsep
             Elab2 \derives \ePf{\cdot} {e_2 : A \elto{M_2}} {Subd.}
             \Pf{}{} {M_2 \val} {Above}
             Elab2$'$ \derives \ePf{\cdot} {e_2 \stepe^* v_2} {By \Lemmaref{lem:value-mono}}
             \ePf{\cdot} {v_2 : A \elto{M_2}} {\ditto}
             \Pf {(\Lam{x} e_0) e_2} {\stepe^*} {(\Lam{x} e_0) v_2} {\!\!By \zTwelf{step*app2}}
             \Pf {e_1\;e_2} {\stepe^*} {(\Lam{x} e_0) v_2} {\!\!\!\!\!\!\!\!By \zTwelf{step*append}}
             \Pf {(\Lam{x} e_0) v_2} {\stepe~} {[v_2/x]e_0} {\!\!\!\!\!\!\!\!By `step/beta'}
             \decolumnizePf
             StepsAppBeta \derives \Pf {~e_1\,e_2} {\stepe^*} {[v_2/x]e_0} {\!\!By \zTwelf{step*snoc}}
             \proofsep
             ElabBody \derives \ePf{x : A} {e_0 : B \elto{M_0}}  {Above}
           \end{llproof}
           ~\\[-5pt]
           $\cdot \entails [v_2/x]e_0 : B \elto{[M_2/x]M_0}$~~~By \Lemmaref{lem:subst-elab} (Elab2$'$)
             \qedhere
  \end{itemize}
\end{proof}

\begin{theorem}[Multi-step Consistency]   \Label{thm:consistency-star} ~\\
  If $\cdot \entails e : A \elto{M}$
  and $M \steptms W$
  then there exists $v$ such that
  $e \stepse v$ and
  $\cdot \entails v : A \elto{W}$.
\end{theorem}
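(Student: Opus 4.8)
The plan is to proceed by induction on the length of the target reduction sequence $M \steptms W$, using the single-step Consistency theorem (\Theoremref{thm:consistency}) for the inductive step and Value monotonicity (\Lemmaref{lem:value-mono}) for the base case. Intuitively, this theorem is just the transitive closure of \Theoremref{thm:consistency}, with a final appeal to value monotonicity to guarantee that the source side reaches an actual value rather than merely some expression elaborating to $W$.

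In the base case the sequence has length zero, so $M = W$ is already a target value. We are given $\cdot \entails e : A \elto{W}$, but $e$ need not itself be a source value, so we cannot simply take $v = e$. Instead I would appeal to \Lemmaref{lem:value-mono}, which yields a source value $v$ with $e \stepse v$ and $\cdot \entails v : A \elto{W}$. This is precisely the required conclusion, and it is the one place in the argument where the lemma relating elaboration-of-values to source values does real work.

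For the inductive step, suppose $M \steptms W$ has length $n+1$. Decompose it as a first step $M \steptm M_1$ followed by a sequence $M_1 \steptms W$ of length $n$. Applying \Theoremref{thm:consistency} to $\cdot \entails e : A \elto{M}$ and $M \steptm M_1$ gives a source expression $e_1$ with $e \stepse e_1$ and $\cdot \entails e_1 : A \elto{M_1}$. The induction hypothesis now applies to $\cdot \entails e_1 : A \elto{M_1}$ and the shorter sequence $M_1 \steptms W$, producing a source value $v$ with $e_1 \stepse v$ and $\cdot \entails v : A \elto{W}$. Chaining $e \stepse e_1$ with $e_1 \stepse v$ by transitivity of $\stepse$ (which is, after all, the reflexive--transitive closure of $\stepe$) gives $e \stepse v$, completing the case.

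I do not expect a genuine obstacle here, since each piece is already supplied by the preceding results; the only point demanding care is the base case, where it is tempting but wrong to take $v = e$. Omitting the value-monotonicity appeal there would leave a source expression that elaborates to $W$ but is not itself a value, violating the stated conclusion. In the Twelf development this corresponds to iterating the single-step consistency result along the target reduction trace and finishing with the value-monotonicity step.
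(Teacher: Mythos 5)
Your proof is correct and follows essentially the same route as the paper: induction on the length of $M \steptms W$, invoking Lemma~\ref{lem:value-mono} in the base case to obtain an actual source value and Theorem~\ref{thm:consistency} plus transitivity of $\stepse$ in the inductive step. Your explicit warning against taking $v = e$ in the base case is well placed; the paper's own prose glosses over exactly that point.
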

\begin{proof}
  By induction on the derivation of $M \steptms W$.

  If $M$ is some value $w$ then, by \Lemmaref{lem:value-mono},
  $e$ is some value $v$.  The source expression $e$ steps to itself in zero steps,
  so $v \stepse v$, and $\cdot \entails v : A \elto{W}$ is given ($e = v$ and $M = W$).

  Otherwise, we have $M \steptm M'$ where $M' \steptms W$. 
  We want to show $\cdot \entails e' : A \elto{M'}$, where $e \stepse e'$.
  By \Theoremref{thm:consistency}, either $\cdot \entails e : A \elto{M'}$,
  or $e \stepe e'$ and $\cdot \entails e' : A \elto{M'}$.

  \begin{itemize}
  \item If $\cdot \entails e : A \elto{M'}$, let $e' = e$, so $\cdot \entails e' : A \elto{M'}$
    and $e \stepse e'$ in zero steps.
  \item If $e \stepe e'$ and $\cdot \entails e' : A \elto{M'}$,
   we can use the i.h., showing that $e' \stepse v$ and $\cdot \entails v : A \elto{W}$.
  \end{itemize}
  See \zTwelf{consistency*} in \TwelfFile{consistency.elf}.  
\end{proof}

\subsection{Summing Up} \Label{sec:elab:main}

\begin{theorem}[Static Semantics] \Label{thm:summary-static} ~\\
  If $\cdot \entails e : A$ (using any of the rules in \Figureref{fig:source-typing})
  then there exists $e'$ such that $\cdot \entails e' : A \elto{M}$
  and $\cdot \entails M : \tytrans{A}$.
\end{theorem}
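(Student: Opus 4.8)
The plan is to recognize this statement as a summarizing corollary that simply chains together three results already established earlier, requiring no fresh induction of its own. First I would invoke the Coercion theorem (\Theoremref{thm:coerce}): given a derivation of $\cdot \entails e : A$ that uses any of the rules of \Figureref{fig:source-typing}---in particular, possibly \Rsub---it yields an expression $e'$ together with a derivation of $\cdot \entails e' : A$ that never uses \Rsub. This is where essentially all the real content lives, since that theorem has already absorbed the work of replacing every use of subsumption with explicit, operationally-inert coercions.

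Second, I would apply Completeness of Elaboration (\Theoremref{thm:elab-complete}) to the subsumption-free derivation of $\cdot \entails e' : A$, obtaining a target term $M$ with $\cdot \entails e' : A \elto{M}$. The hypothesis of that theorem---that the source typing derivation avoids \Rsub---is exactly what the previous step guarantees, so the two results compose without friction. Third, I would apply Elaboration Type Soundness (\Theoremref{thm:elab-type-soundness}) to $\cdot \entails e' : A \elto{M}$ to conclude $\tytrans{\cdot} \entails M : \tytrans{A}$; since the translation of the empty context is again the empty context, this is just $\cdot \entails M : \tytrans{A}$. Together with $\cdot \entails e' : A \elto{M}$ from the second step, this is precisely the claim.

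The main ``obstacle'' here is not mathematical but bookkeeping: one must check that the subsumption-free derivation handed back by Coercion is literally the shape expected by Completeness of Elaboration, and observe that working under the empty context keeps every intermediate judgment closed (as required by the standing remark following the Coercion theorem). Because each of the three lemmas does its own heavy lifting, the theorem reduces to a single composition once the hypotheses are lined up, and I would expect the \emph{entire} proof to be no more than the sequence ``apply \Theoremref{thm:coerce}, then \Theoremref{thm:elab-complete}, then \Theoremref{thm:elab-type-soundness}.''
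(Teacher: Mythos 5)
Your proposal matches the paper's proof exactly: the paper proves this theorem by citing Theorems \ref{thm:coerce}, \ref{thm:elab-complete}, and \ref{thm:elab-type-soundness} in that order, which is precisely the three-step composition you describe. Your additional remarks about the empty context and lining up the subsumption-free hypothesis are correct bookkeeping observations that the paper leaves implicit.
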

\begin{proof}
  By Theorems \ref{thm:coerce} (coercion), \ref{thm:elab-complete} (completeness
  of elaboration) and \ref{thm:elab-type-soundness} (elaboration type soundness).
\end{proof}

\begin{theorem}[Dynamic Semantics] \Label{thm:summary-dynamic} ~\\
  If $\cdot \entails e : A \elto{M}$ and $M \stepstm W$ then there is a source value $v$ such that
  $e \stepse v$ and $\cdot \entails v : A$.
\end{theorem}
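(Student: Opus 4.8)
The plan is to obtain this statement as an immediate corollary of Multi-step Consistency (\Theoremref{thm:consistency-star}), which has already done all the substantive work. First I would apply \Theoremref{thm:consistency-star} directly to the two hypotheses $\cdot \entails e : A \elto{M}$ and $M \stepstm W$. This yields a source value $v$ with $e \stepse v$ and $\cdot \entails v : A \elto{W}$. The stepping $e \stepse v$ is exactly the first conclusion we need, and the second component supplies an elaboration typing for $v$.

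It then remains only to discard the target witness: the theorem asks for the non-elaborating judgment $\cdot \entails v : A$, whereas Multi-step Consistency produces the richer $\cdot \entails v : A \elto{W}$. I would close this gap by appealing to \Theoremref{thm:typeof-elab}, which states that any elaboration derivation $\Gamma \entails e : A \elto{M}$ induces a subsumption-free source typing $\Gamma \entails e : A$. Instantiating that theorem with $v$, $A$, and $W$ converts $\cdot \entails v : A \elto{W}$ into $\cdot \entails v : A$, which is precisely the remaining conclusion.

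Because both ingredients are already in hand, there is no real obstacle; the entire content of the theorem is a repackaging of Multi-step Consistency purely in terms of the source type system, forgetting the elaborated term $W$. This is the form most useful to a reader who cares only about source-level behaviour---namely, that whenever a well-typed source expression elaborates to a target term that converges to a value, the source expression itself reduces (under the nondeterministic source semantics) to a well-typed source value. Read alongside the Static Semantics summary (\Theoremref{thm:summary-static}), it confirms that the diagram of \Figureref{fig:diagram} commutes.
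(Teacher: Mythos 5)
Your proposal is correct and matches the paper's proof exactly: the paper also obtains this theorem as an immediate corollary of Multi-step Consistency (\Theoremref{thm:consistency-star}) followed by \Theoremref{thm:typeof-elab} to drop the elaboration witness.
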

\begin{proof}
  By Theorems \ref{thm:consistency-star} (multi-step consistency)
  and \ref{thm:typeof-elab}.
\end{proof}

Recalling the diagram in \Figureref{fig:diagram}, \Theoremref{thm:summary-dynamic} shows that
it commutes.

Both theorems are stated and proved in \TwelfFile{summary.elf}.  Combined with
a run of the target program ($M \stepstm W$), they show that elaborated programs
are consistent with source programs.

\subsection{The Value Restriction} \Label{sec:value-restriction}

\citet{Davies00icfpIntersectionEffects} showed that the then-standard intersection introduction
(that is, our \Rsectintro) was unsound in a call-by-value semantics in the presence of effects (specifically,
mutable references).  Here is an example (modeled on theirs).  Assume a base type
$\Nat$ with values $0, 1, 2, \dots$ and a type $\Pos$ of strictly positive naturals with values $1, 2, \dots$;
assume $\Pos \subtype \Nat$.

\begin{displ}
  \begin{tabular}[c]{l}
     $\Let{r}{(\Refexp{1}) : (\Refty{\Nat}) \sectty (\Refty{\Pos})}{}$ \\
     ~~~$r \Gets 0;$ \\
     ~~~$(\Deref{r}) : \Pos$
  \end{tabular}
\end{displ}

Using the unrestricted \Rsectintro rule, $r$ has type $(\Refty{\Nat}) \sectty (\Refty{\Pos})$;
using \Rsectelim{1} yields $r : \Refty{\Nat}$, so the write $r \Gets 0$ is well-typed;
using \Rsectelim{2} yields $r : \Refty{\Pos}$, so the read $\Deref{r}$ produces a $\Pos$.
In an unelaborated setting, this typing is unsound: $(\Refexp{1})$ creates a single cell, initially containing
$1$, then overwritten with $0$, so $\Deref{r} \stepe 0$, which does not have type $\Pos$.

Davies and Pfenning proposed, analogously to ML's value restriction on $\forall$-introduction, 
an $\sectty$-introduction rule that only types values $v$.   This rule is sound with
mutable references:

\begin{mathdispl}
  \Infer{\Rsectintro\textit{ (Davies and Pfenning)}}
      {v : A_1   \\   v : A_2}
      {v : A_1 \sectty A_2}
\end{mathdispl}

In an elaboration system like ours, however, the problematic example above is sound,
because our \Rsectintro elaborates $\Refexp{1}$ to two distinct expressions, which
create two unaliased cells:

\begin{mathdispl}
  \Infer{\Rsectintro}
      {\Refexp{1} : \Refty{\Nat} \elto{\Refexp{1}}
      \\
       \Refexp{1} : \Refty{\Pos} \elto{\Refexp{1}}
      }
      {\Refexp{1} : \Refty{\Nat} \sectty \Refty{\Pos} \elto{\Pair{\Refexp{1}}{\Refexp{1}}}}
\end{mathdispl}
Thus, the example elaborates to

\begin{displ}
  \begin{tabular}[c]{l}
     $\Let{r}{\Pair{\Refexp{1}}{\Refexp{1}}}{}$ \\
     ~~~$(\Fst{r}) \Gets 0;$ \\
     ~~~$(\Deref{\Snd{r}}) : \Pos$
  \end{tabular}
\end{displ}
which is well-typed, but does not ``go wrong'' in the type-safety sense: the assignment writes to the
first cell (\Rsectelim{1}), and the dereference reads the second cell (\Rsectelim{2}), which still
contains the original value $1$.  The restriction-free \Rsectintro
thus appears sound in our setting.  Being \emph{sound} is not the same as being \emph{useful}, though;
such behaviour is less than intuitive, as we discuss in the next section. %

\section{Coherence} \Label{sec:coherence}

The merge construct, while simple and powerful, has serious usability issues when
the parts of the merge have overlapping types.  Or, more accurately, when they
would have overlapping types---types with nonempty intersection---in a merge-free system:
in our system, \emph{all} intersections $A \sectty B$ of nonempty $A$, $B$ are
nonempty: if $v_A : A$ and $v_B : B$ then $\Merge{v_A}{v_B} : A \sectty B$
by \Rmerge{k} and \Rsectintro.

According to the elaboration rules, $\Merge{0}{1}$ (checked against $\Nat$) could
elaborate to either $0$ or $1$.  Our implementation would elaborate $\Merge{0}{1}$
to $0$, because it tries the left part $0$ first.  Arguably, this is better behaviour than
actual randomness, but hardly helpful to the programmer.
Perhaps even more confusingly, suppose we are checking
$\Merge{0}{1}$ against $\Pos \sectty \Nat$, where $\Pos$ and $\Nat$ are
as in \Sectionref{sec:value-restriction}.  Our implementation would elaborate
$\Merge{0}{1}$ to $\Pair{1}{0}$, but $\Merge{1}{0}$ to $\Pair{1}{1}$.

Since the behaviour of the target program depends on the particular elaboration
typing used, the system lacks \emph{coherence}~\citep{Reynolds91:coherence}.

To recover a coherent semantics, we could limit merges according to their surface syntax,
as Reynolds did in Forsythe, but this seems restrictive; also,
crafting an appropriate syntactic restriction depends on details of the type system,
which is not robust as the type system is extended.
A more general approach might be to reject (or warn about) merges in which more
than one part checks against the same type (or the same part of an intersection type).
Implementing this seems straightforward, though it would slow typechecking since
we could not skip over $e_2$ when $e_1$ checks in $\Merge{e_1}{e_2}$.

Leaving merges aside, the mere fact that \Rsectintro elaborates the expression
twice creates problems with mutable references, as we saw in \Sectionref{sec:value-restriction}.
For this, we could revive the value restriction in \Rsectintro, at least for expressions
whose types might overlap.

\section{Applying Intersections and Unions}
  \Label{sec:encodings}

\subsection{Overloading} \Label{sec:overload}

\begin{figure}[t]
   \lstinputlisting[name=overload.rml, language=StardustML,label=lst:overload]%
                             {overload.rml}%
  
   Output of target program after elaboration:~~~\texttt{150.0; 81; 0.25}

   \medskip

  \caption{Example of overloading}
  \FLabel{fig:overloadlisting}
\end{figure}

A very simple use of unrestricted intersections is to ``overload'' operations such as
multiplication and conversion of data to printable form.  SML provides overloading
only for a fixed set of built-in operations; it is not possible to write a single
\texttt{square} function, as we do in  \Figureref{fig:overloadlisting}.  Despite
its appearance, \lstinline!$\annobegin$ val square : $\dots$ $\annoend$! is
not a comment but an annotation used to guide our bidirectional typechecker
(this syntax, inherited from Stardust, was intended for compatibility with SML
compilers, which saw these annotations as comments and ignored them).

In its present form, this idiom is less powerful than type classes~\citep{Wadler89:typeclasses}.
We could extend \texttt{toString} for lists, which would handle lists of integers
and lists of reals, but not lists of lists; the version of \texttt{toString} for lists would use
the \emph{earlier} occurrence of \texttt{toString}, defined for integers and reals
only.  Adding a mechanism for naming a type and then
``unioning'' it, recursively, is future work.

\subsection{Records} \Label{sec:records}

\citet{Reynolds96:Forsythe} developed an encoding of records using
intersection types and his version of the merge construct; similar ideas
appear in \citet{Castagna95}.
Though straightforward,
this encoding is more expressive than SML records.

The idea is to add single-field records as a primitive notion, through
a type $\Recordtype{fld}{A}$ with introduction form $\Recordexp{fld}{e}$
and the usual eliminations (explicit projection and pattern matching).
Once this is done, the multi-field record
type $\Recordty{\Fldty{fld1}{A_1}\Fldtysep\Fldty{fld2}{A_2}}$
is simply $\Recordtype{fld1}{A_1} \sectty \Recordtype{fld2}{A_2}$,
and the corresponding intro form is a merge: $\Merge{\Recordex{\Fld{fld1}{A_1}}}{\Recordex{\Fld{fld2}{A_2}}}$.
More standard concrete syntax, such as $\Recordex{\Fld{fld1}{A_1}\Fldsep\Fld{fld2}{A_2}}$,
can be handled trivially during parsing.

With subtyping on intersections, we get the desired behaviour of what SML calls
``flex records''---records with some fields not listed---with fewer
of SML's limitations.  Using this encoding, a function
that expects a record with fields $\texttt{x}$ and $\texttt{y}$ can be given
\emph{any} record that has at least those fields, whereas SML only
allows one fixed set of fields.  For example, the code in \Figureref{fig:recordlisting}
is legal in our language but not in SML.

One problem with this approach is that expressions with
duplicated field names are accepted.  This is part of the larger
issue discussed in \Sectionref{sec:coherence}.

\begin{figure}[t]
   \lstinputlisting[name=record.rml, language=StardustML,label=lst:record]%
                             {record.rml}%
  
   Output of target program after elaboration:
\begin{verbatim}
    get_xy rec1 = (1,11)
    get_xy rec2 = (2,22) (extra = 100)
    get_xy rec3 = (3,33) (other = a string)
\end{verbatim}

  \caption{Example of flexible multi-field records}
  \FLabel{fig:recordlisting}
\end{figure}

\subsection{Heterogeneous Data} \Label{sec:dynamic-typing}

A common argument for dynamic typing over static typing
is that heterogeneous data structures are more convenient.
For example, dynamic typing makes it very easy to create and manipulate lists
containing both integers and strings.  The penalty is the loss of compile-time invariant
checking.  Perhaps the lists should contain integers and strings, but not booleans;
such an invariant is not expressible in traditional dynamic typing.

A common rebuttal from advocates of static typing is that it is easy to simulate
dynamic typing in static typing.  Want a list of integers and strings?  Just
declare a datatype
\begin{lstlisting}
  datatype int_or_string = Int of int
                        | String of string
\end{lstlisting}
and use \lstinline!int_or_string list!s.  This guarantees the invariant that
the list has only integers and strings, but is unwieldy: each new element must
be wrapped in a constructor, and operations on the list elements must unwrap
the constructor, even when those operations accept both integers
and strings (such as a function of type $(\Int \arr \String) \sectty (\String \arr \String)$).

In this situation, our approach provides the compile-time invariant checking
of static typing \emph{and} the transparency of dynamic typing.  The type
of list elements (if we bother to declare it) is just a union type:
\begin{lstlisting}
  type int_union_string = int \/ string
\end{lstlisting}
Elaboration transforms programs with \lstinline!int_union_string!
into programs with \lstinline!int_or_string!.

Along these lines, we use in \Figureref{fig:dynlisting} a
type \lstinline!dyn!, defined as \lstinline!int \/ real \/ string!.
It would be useful to also allow lists, but
the current implementation lacks
recursive types of a form that could express ``\lstinline!dyn = ... \/ dyn list!''.

\begin{figure}[t]
   \lstinputlisting[name=dyn.rml, language=StardustML,label=lst:dyn]%
                             {dyn.rml}%
  
   Output of target program after elaboration:

\begin{verbatim}
    1::2::what::3.14159::4::why::nil
\end{verbatim}

  \caption{Example of heterogeneous data}
  \FLabel{fig:dynlisting}
\end{figure}

\section{Implementation} \Label{sec:implementation}

Our implementation is faithful to the spirit of the elaboration rules above, but is
substantially richer.  
It is based on Stardust, a typechecker for a subset of core
Standard ML with support for inductive datatypes, products, intersections,
unions, refinement types and indexed types~\citep{Dunfield07:Stardust},
extended with support for (first-class) polymorphism~\citep{Dunfield09}.  We do not yet
support all these features;
support for first-class polymorphism looks hardest, since Standard ML
compilers cannot even handle higher-rank predicative polymorphism.
Elaborating programs that use ML-style prenex polymorphism should work,
but we currently lack any proof or even significant testing to back that up.

Our implementation does currently support merges, intersections and unions,
a top type, a bottom (empty) type, single-field records and encoded
multi-field records (\Sectionref{sec:records}), and inductive datatypes (if their
constructors are not of intersection type, though they can take intersections
and unions as argument; removing this restriction is a high priority).

\subsection{Bidirectional Typechecking}

Our implementation uses
\emph{bidirectional typechecking}~\citep{Pierce00,Dunfield04:Tridirectional,Dunfield09},
an increasingly common technique in advanced type systems;
see \citet{Dunfield09} for references.
This technique offers two major benefits over Damas-Milner type inference:
it works for many type systems where annotation-free inference is undecidable,
and it seems to produce more localized error messages.

Bidirectional typechecking does need more type annotations.  However, by
following the approach of~\citet{Dunfield04:Tridirectional},
annotations are never needed except on redexes.  The present implementation allows
some annotations on redexes to be omitted as well.

The basic idea of bidirectional typechecking is to separate the activity of checking an
expression against a known type from the activity of synthesizing a type from the
expression itself:

\begin{displ}
  \begin{tabular}[t]{lll}
    $\Gamma \entails e \against A$  &  $e$ checks against known type $A$
\\[1pt] 
    $\Gamma \entails e \has A$  &  $e$ synthesizes type $A$
  \end{tabular}
\end{displ}

In the checking judgment, $\Gamma$, $e$ and $A$ are inputs to the typing algorithm,
which either succeeds or fails.  In the synthesis judgment, $\Gamma$ and $e$ are inputs
and $A$ is output (assuming synthesis does not fail).

Syntactically speaking, crafting a bidirectional type system from a type assignment system (like the one
in \Figureref{fig:source-typing}) is a matter of taking the colons in the $\Gamma \entails e : A$
judgments, and replacing some with ``$\against$'' and some with ``$\has$''.
Except for \Rmerge{k}, our typing rules can all be found in \citet{Dunfield04:Tridirectional},
who argued that introduction rules should check and elimination rules should synthesize.
(Parametric polymorphism muddies this picture, but see \citet{Dunfield09} for an approach
used by our implementation.)
For functions, this leads to the bidirectional rules
\begin{mathdispl}
\small
      \Infer{\Rlam}
          {\Gamma, x : A \entails e \against B }
          {\Gamma \entails \Lam{x} e \against A \arr B}
      \rulesep
      \Infer{\Rapp}
           {\Gamma \entails e_1 \has A \arr B 
            \\
            \Gamma \entails e_2 \against A 
           }
           {\Gamma \entails e_1\,e_2 \has B }
\end{mathdispl}

The merge rule, however, neither introduces nor eliminates.
We implement the obvious checking rule (which, in practice, always tries to check
against $e_1$ and, if that fails, against $e_2$):
\begin{mathdispl}
       \Infer{}%
            {\Gamma \entails   e_k \against A }
            {\Gamma \entails \Merge{e_1}{e_2} \against A  }
\end{mathdispl}
Since it can be inconvenient to annotate merges, we also implement
synthesis rules, including one that can synthesize an intersection.
\begin{mathdispl}
       \Infer{}%
            {\Gamma \entails   e_k \has A }
            {\Gamma \entails \Merge{e_1}{e_2} \has A }
       \rulesep
       \Infer{}%
            {\Gamma \entails   e_1 \has A_1
              \\
              \Gamma \entails   e_2 \has A_2}
            {\Gamma \entails \Merge{e_1}{e_2} \has A_1 \sectty A_2 }
\end{mathdispl}

Given a bidirectional typing derivation, it is generally easy to show that
a corresponding type assignment exists: replace all ``$\has$''
and ``$\against$'' with ``$:$'' (and erase explicit type annotations from
the expression).

\subsection{Performance}

Intersection typechecking is PSPACE-hard~\citep{Reynolds96:Forsythe}.  In practice,
we elaborate the examples in Figures \ref{fig:overloadlisting}, \ref{fig:recordlisting} and \ref{fig:dynlisting}
in less than a second, but they are very small.
On somewhat larger examples, such as those discussed
by \citet{Dunfield07:Stardust}, the non-elaborating version of Stardust could take minutes, thanks
to heavy use of backtracking search (trying \Rsectelim{1} then \Rsectelim{2}, etc.) and
the need to check the same expression against different types (\Rsectintro) or with different
assumptions (\Runelim).  Elaboration doesn't help with this, but it shouldn't hurt by more
than a constant factor: the shapes of the derivations and the labour of backtracking
remain the same.

To scale the approach to larger programs, we will need to consider how to efficiently represent
elaborated intersections and unions.  Like the theoretical development, the implementation
has 2-way intersection and union types, so the type $A_1 \sectty A_2 \sectty A_3$ is parsed
as $(A_1 \sectty A_2) \sectty A_3$, which becomes $(A_1 * A_2) * A_3$.  A flattened
representation $A_1 * A_2 * A_3$ would be more efficient, except when the program
uses values of type $(A_1 \sectty A_2) \sectty A_3$ where values of type $A_1 \sectty A_2$
are expected; in that case, nesting the product allows the inner pair to be passed directly
with no reboxing.  Symmetry is also likely to be an issue: passing $v : A_1 \sectty A_2$
where $v : A_2 \sectty A_1$ is expected requires building a new pair.  Here, it may
be helpful to put the components of intersections into a canonical order.

The foregoing applies to unions as well---introducing a value of a three-way union
may require two injections, and so on.

\section{Related Work} \Label{sec:related}

Intersections were originally developed by
\citet{CDV81:IntersectionTypes} and \citet{Pottinger80}, among others; \citet{Hindley92} gives
a useful introduction and bibliography.  Work on union types began later~\citep{MacQueen86:IdealModel};
\citet{Barbanera95} is a key paper on type assignment for unions.  %

\paragraph{Forsythe.}~\!\!In the late 1980s\footnote{The citation year 1996 is the date of
the revised description of Forsythe; the core ideas are found in \citet{Reynolds88:Forsythe}.},
Reynolds invented Forsythe~\citep{Reynolds96:Forsythe}, the first practical programming language
based on intersection types.
In addition to an unmarked introduction rule like \Rsectintro, the Forsythe type
system includes rules for typing a construct $p_1\texttt{,} p_2$---``a construction for intersecting or `merging' meanings''~\citep[p.\ 24]{Reynolds96:Forsythe}.
Roughly analogous to $\Merge{e_1}{e_2}$, this construct is used to encode a variety of features,
but can only be used unambiguously.
For instance, a record and a function can be merged, but two functions cannot (actually they can, but
the second phrase $p_2$ overrides the first).  %
Forsythe does not have union types.

\paragraph{The $\lambda\&$-calculus.}~\!\!%
\citet{Castagna95} developed the $\lambda\&$-calculus, which has
$\&$-terms---functions whose body is a merge, and whose type is an intersection
of arrows.  In their semantics, applying a $\&$-term to some argument reduces the term to
the branch of the merge with the smallest (compatible) domain.  Suppose we have
a $\&$-term with two branches, one of type $\Nat \arr \Nat$ and one of type
$\Pos \arr \Pos$.  Applying that $\&$-term to a value of type $\Pos$ steps to the
second branch, because its domain $\Pos$ is (strictly) a subtype of $\Nat$.

Despite the presence of a merge-like construct, their work on the
$\lambda\&$-calculus is markedly different from ours: it gives a semantics
to programs directly, and uses type information to do so, whereas we elaborate
to a standard term language with no runtime type information.
In their work, terms have both \emph{compile-time types} and \emph{run-time types}
(the run-time types
become more precise as the computation continues); the semantics
of applying a $\&$-term depends on the run-time type of the argument to choose
the branch.  The choice of the \emph{smallest} compatible domain is consistent
with notions of inheritance in object-oriented programming, where a class can
override the methods of its parent.

\paragraph{Semantic subtyping.}
Following the $\lambda\&$-calculus, \citet{Frisch08} investigated a notion of
purely semantic subtyping, where the definition of subtyping arises from a model of types,
as opposed to the syntactic approach used in our system.  They support intersections,
unions, function spaces and even complement.  Their language includes
a \emph{dynamic type dispatch} which, very roughly, combines a merge with
a generalization of our union elimination.  Again, the semantics relies on
run-time type information.  %

\paragraph{Pierce's work.}~\!\!The earliest reference I know for the idea of compiling
intersection to product is \citet{PierceThesis}: ``a language with
intersection types might even provide two different object-code sequences for the two versions
of $+$ [for $\Int$ and for $\Real$]'' (p.\ 11).
Pierce also developed a language with union types, including a term-level construct to
explicitly eliminate them~\citep{Pierce91:IntersectionsUnionsPolymorphism}.
But this construct is only a marker for where to eliminate the union: it has only
one branch, so the same term must typecheck under each assumption.  Another difference is
that this construct is the only way to eliminate a union type in his system, whereas our
\Runelim is marker-free. %
Intersections, also present in his language, have no explicit introduction construct; the introduction
rule is like our \Rsectintro.

\paragraph{Flow types.}~\!\!\citet{Turbak97} and \citet{Wells02:IntersectionFlow} use intersections in
a system with flow types.  They produce programs with \emph{virtual tuples} and \emph{virtual sums},
which correspond to the tuples and sums we produce by elaboration.  However, these constructs are
internal: nothing in their work corresponds to our explicit intersection and union term
constructors, since their system is only intended to capture existing flow properties.
They do not compile the virtual constructs into the ordinary ones.

\paragraph{Heterogeneous data and dynamic typing.}~\!\!Several approaches
to combining dynamic typing's transparency and static typing's guarantees have been
investigated.  \emph{Soft typing}~\citep{Cartwright91:SoftTyping,Aiken94:SoftTyping}
adds a kind of type inference on top of dynamic typing, but provides no ironclad guarantees.
Typed Scheme~\citep{Tobin-Hochstadt08}, developed to retroactively type Scheme programs,
has a flow-sensitive type system with union types, directly supporting heterogeneous data
in the style of  \Sectionref{sec:dynamic-typing}.  Unlike soft typing,
Typed Scheme guarantees type safety and provides genuine (even first-class) polymorphism,
though programmers are expected to provide some annotations.

\paragraph{Type refinements.}~\!\!Restricting intersections and unions to
refinements of a single base type simplifies many issues, and is conservative: 
programs can be checked against refined types, then compiled normally.  This approach
has been explored for intersections~\citep{Freeman91,Davies00icfpIntersectionEffects},
and for intersections and unions~\citep{Dunfield03:IntersectionsUnionsCBV,Dunfield04:Tridirectional}.

\section{Conclusion} \Label{sec:conclusion}

We have laid a simple yet powerful foundation for compiling unrestricted
intersections and unions: elaboration into a standard functional language.
Rather than trying to directly understand the behaviours of source programs,
we describe them via their consistency with the target programs.

The most immediate challenge is coherence:  While our elaboration approach
guarantees type safety of the compiled program, the meaning of the compiled
program depends on the particular elaboration typing derivation used; the meaning
of the source program is actually implementation-defined.

One possible solution is to restrict typing of merges so that a merge has type $A$ only if
\emph{exactly one} branch has type $A$.
We could also partially revive the value restriction, giving non-values
intersection type only if (to a conservative approximation) both components
of the intersection are provably disjoint, in the sense that no merge-free
expression has both types.

Another challenge is to reconcile, in spirit and form, the unrestricted view of intersections
and unions of this paper with the refinement approach.  Elaborating a refinement
intersection like $(\Pos \arr \Neg) \sectty (\Neg \arr \Pos)$ to a pair of functions
seems pointless (unless it can somehow facilitate optimizations in the compiler).
It will probably be necessary to have ``refinement'' and ``unrestricted'' versions
of the intersection and union type constructors, at least during elaboration; it may
be feasible to hide this distinction at the source level.

%
%
%

\begin{comment}
    Further ahead, a larger goal is to move from the three-phase design, with an existing compiler
    as the third phase, to a two-phase approach: a compiler with direct support for intersection types.
    Such a compiler might take intersections as primitive, supporting them ``all the way down'',
    or it might simply use them to reduce the code and complexity devoted to supporting various,
    seemingly disparate features,
    by encoding them all as intersections and unions (eventually elaborating them to pairs and sums,
    similarly to this paper).
\end{comment}

%

\subsection*{Acknowledgments}

In 2008, Adam Megacz suggested (after I explained the idea of compiling intersection to product)
that one could use an existing ML compiler ``as a backend''.
The anonymous ICFP reviewers' suggestions have (I hope) significantly improved
the presentation.  Finally, I had useful discussions about this work with Yan Chen,
Matthew A.\ Hammer,
Scott Kilpatrick,
Neelakantan R.\ Krishnaswami,
and
Viktor Vafeiadis.

\addtolength{\bibsep}{-2.0pt}

\vspace{-2pt}

\bibliographystyle{plainnat}
\bibliography{intcomp}

\begin{thebibliography}{33}
\providecommand{\natexlab}[1]{#1}
\providecommand{\url}[1]{\texttt{#1}}
\expandafter\ifx\csname urlstyle\endcsname\relax
  \providecommand{\doi}[1]{doi: #1}\else
  \providecommand{\doi}{doi: \begingroup \urlstyle{rm}\Url}\fi

\bibitem[Aiken et~al.(1994)Aiken, Wimmers, and Lakshman]{Aiken94:SoftTyping}
Alexander Aiken, Edward~L. Wimmers, and T.~K. Lakshman.
\newblock Soft typing with conditional types.
\newblock In \emph{Principles of Programming Languages}, pages 163--173, 1994.

\bibitem[Barbanera et~al.(1995)Barbanera, Dezani-Ciancaglini, and
  de'Liguoro]{Barbanera95}
Franco Barbanera, Mariangiola Dezani-Ciancaglini, and Ugo de'Liguoro.
\newblock Intersection and union types: syntax and semantics.
\newblock \emph{Information and Computation}, 119:\penalty0 202--230, 1995.

\bibitem[Cartwright and Fagan(1991)]{Cartwright91:SoftTyping}
Robert Cartwright and Mike Fagan.
\newblock Soft typing.
\newblock In \emph{Programming Language Design and Implementation}, pages
  278--292, 1991.

\bibitem[Castagna et~al.(1995)Castagna, Ghelli, and Longo]{Castagna95}
Giuseppe Castagna, Giorgio Ghelli, and Giuseppe Longo.
\newblock A calculus for overloaded functions with subtyping.
\newblock \emph{Information and Computation}, 117\penalty0 (1):\penalty0
  115--135, 1995.

\bibitem[Coppo et~al.(1981)Coppo, Dezani-Ciancaglini, and
  Venneri]{CDV81:IntersectionTypes}
M.~Coppo, M.~Dezani-Ciancaglini, and B.~Venneri.
\newblock Functional characters of solvable terms.
\newblock \emph{Zeitschrift f. math. Logik und Grundlagen d. Math.},
  27:\penalty0 45--58, 1981.

\bibitem[Davies(2005)]{DaviesThesis}
Rowan Davies.
\newblock \emph{Practical Refinement-Type Checking}.
\newblock PhD thesis, Carnegie Mellon University, 2005.
\newblock CMU-CS-05-110.

\bibitem[Davies and Pfenning(2000)]{Davies00icfpIntersectionEffects}
Rowan Davies and Frank Pfenning.
\newblock Intersection types and computational effects.
\newblock In \emph{ICFP}, pages 198--208, 2000.

\bibitem[Dunfield(2007)]{Dunfield07:Stardust}
Jana Dunfield.
\newblock Refined typechecking with {Stardust}.
\newblock In \emph{Programming Languages meets Program Verification (PLPV
  '07)}, 2007.

\bibitem[Dunfield(2009)]{Dunfield09}
Jana Dunfield.
\newblock Greedy bidirectional polymorphism.
\newblock In \emph{ML Workshop}, pages 15--26, 2009.
\newblock \url{http://www.cs.queensu.ca/~jana/papers/poly/}.

\bibitem[Dunfield(2011)]{Dunfield11:letnormal}
Jana Dunfield.
\newblock Untangling typechecking of intersections and unions.
\newblock In \emph{2010 Workshop on Intersection Types and Related Systems},
  volume~45 of \emph{EPTCS}, pages 59--70, 2011.
\newblock \url{arXiv:1101.4428v1 [cs.PL]}.

\bibitem[Dunfield(2012)]{TwelfSupp}
Jana Dunfield.
\newblock Twelf proofs accompanying this paper, March 2012.
\newblock \url{http://www.cs.queensu.ca/~jana/intcomp.tar} or
  \url{http://www.cs.queensu.ca/~jana/intcomp/}.

\bibitem[Dunfield and Pfenning(2003)]{Dunfield03:IntersectionsUnionsCBV}
Jana Dunfield and Frank Pfenning.
\newblock Type assignment for intersections and unions in call-by-value
  languages.
\newblock In \emph{Found. Software Science and Computation Structures (FoSSaCS
  '03)}, pages 250--266, 2003.

\bibitem[Dunfield and Pfenning(2004)]{Dunfield04:Tridirectional}
Jana Dunfield and Frank Pfenning.
\newblock Tridirectional typechecking.
\newblock In \emph{Principles of Programming Languages}, pages 281--292, 2004.

\bibitem[Freeman and Pfenning(1991)]{Freeman91}
Tim Freeman and Frank Pfenning.
\newblock Refinement types for {ML}.
\newblock In \emph{Programming Language Design and Implementation}, pages
  268--277, 1991.

\bibitem[Frisch et~al.(2008)Frisch, Castagna, and Benzaken]{Frisch08}
Alain Frisch, Giuseppe Castagna, and V{\'e}ronique Benzaken.
\newblock Semantic subtyping: dealing set-theoretically with function, union,
  intersection, and negation types.
\newblock \emph{J. ACM}, 55\penalty0 (4):\penalty0 1--64, 2008.

\bibitem[Hindley(1984)]{Hindley84}
J.~Roger Hindley.
\newblock {Coppo-Dezani} types do not correspond to propositional logic.
\newblock \emph{Theoretical Computer Science}, 28:\penalty0 235--236, 1984.

\bibitem[Hindley(1992)]{Hindley92}
J.~Roger Hindley.
\newblock Types with intersection: An introduction.
\newblock \emph{Formal Aspects of Computing}, 4:\penalty0 470--486, 1992.

\bibitem[Kfoury and Wells(2004)]{Kfoury04:PrincipalityExpansion}
Assaf~J. Kfoury and J.~B. Wells.
\newblock Principality and type inference for intersection types using
  expansion variables.
\newblock \emph{Theoretical Computer Science}, 311\penalty0 (1--3):\penalty0
  1--70, 2004.

\bibitem[MacQueen et~al.(1986)MacQueen, Plotkin, and
  Sethi]{MacQueen86:IdealModel}
David MacQueen, Gordon Plotkin, and Ravi Sethi.
\newblock An ideal model for recursive polymorphic types.
\newblock \emph{Information and Control}, 71:\penalty0 95--130, 1986.

\bibitem[{M{\o}ller Neergaard} and Mairson(2004)]{MN04:againstSystemI}
Peter {M{\o}ller Neergaard} and Harry~G. Mairson.
\newblock Types, potency, and idempotency: {W}hy nonlinearity and amnesia make
  a type system work.
\newblock In \emph{ICFP}, pages 138--149, 2004.

\bibitem[Pfenning and Sch{\"u}rmann(1999)]{Pfenning99:Twelf}
Frank Pfenning and Carsten Sch{\"u}rmann.
\newblock System description: {Twelf}---a meta-logical framework for deductive
  systems.
\newblock In \emph{Int'l Conf. Automated Deduction (CADE-16)}, pages 202--206,
  1999.

\bibitem[Pierce(1991{\natexlab{a}})]{Pierce91:IntersectionsUnionsPolymorphism}
Benjamin~C. Pierce.
\newblock Programming with intersection types, union types, and polymorphism.
\newblock Technical Report CMU-CS-91-106, Carnegie Mellon University,
  1991{\natexlab{a}}.

\bibitem[Pierce(1991{\natexlab{b}})]{PierceThesis}
Benjamin~C. Pierce.
\newblock \emph{Programming with intersection types and bounded polymorphism}.
\newblock PhD thesis, Carnegie Mellon University, 1991{\natexlab{b}}.
\newblock Technical Report CMU-CS-91-205.

\bibitem[Pierce and Turner(2000)]{Pierce00}
Benjamin~C. Pierce and David~N. Turner.
\newblock Local type inference.
\newblock \emph{ACM Trans. Prog. Lang. Syst.}, 22:\penalty0 1--44, 2000.

\bibitem[Pottinger(1980)]{Pottinger80}
Garrel Pottinger.
\newblock A type assignment for the strongly normalizable lambda-terms.
\newblock In \emph{To {H. B. Curry}: Essays on Combinatory Logic, Lambda
  Calculus and Formalism}, pages 561--577. Academic Press, 1980.

\bibitem[Reynolds(1988)]{Reynolds88:Forsythe}
John~C. Reynolds.
\newblock Preliminary design of the programming language {Forsythe}.
\newblock Technical Report CMU-CS-88-159, Carnegie Mellon University, 1988.
\newblock \url{http://doi.library.cmu.edu/10.1184/OCLC/18612825}.

\bibitem[Reynolds(1991)]{Reynolds91:coherence}
John~C. Reynolds.
\newblock The coherence of languages with intersection types.
\newblock In \emph{Theoretical Aspects of Computer Software}, volume 526 of
  \emph{LNCS}, pages 675--700. Springer, 1991.

\bibitem[Reynolds(1996)]{Reynolds96:Forsythe}
John~C. Reynolds.
\newblock Design of the programming language {Forsythe}.
\newblock Technical Report CMU-CS-96-146, Carnegie Mellon University, 1996.

\bibitem[Tobin-Hochstadt and Felleisen(2008)]{Tobin-Hochstadt08}
Sam Tobin-Hochstadt and Matthias Felleisen.
\newblock The design and implementation of {Typed Scheme}.
\newblock In \emph{Principles of Programming Languages}, pages 395--406, 2008.

\bibitem[Turbak et~al.(1997)Turbak, Dimock, Muller, and Wells]{Turbak97}
Franklyn Turbak, Allyn Dimock, Robert Muller, and J.~B. Wells.
\newblock Compiling with polymorphic and polyvariant flow types.
\newblock In \emph{Int'l Workshop on Types in Compilation}, 1997.

\bibitem[{Twelf}(2012)]{TwelfWiki}
{Twelf}.
\newblock Twelf wiki, 2012.
\newblock \url{http://twelf.org/wiki/Main_Page}.

\bibitem[Wadler and Blott(1989)]{Wadler89:typeclasses}
Philip Wadler and Stephen Blott.
\newblock How to make {\it ad-hoc} polymorphism less {\it ad hoc}.
\newblock In \emph{Principles of Programming Languages}, pages 60--76, 1989.

\bibitem[Wells et~al.(2002)Wells, Dimock, Muller, and
  Turbak]{Wells02:IntersectionFlow}
J.B. Wells, Allyn Dimock, Robert Muller, and Franklyn Turbak.
\newblock A calculus with polymorphic and polyvariant flow types.
\newblock \emph{J. Functional Programming}, 12\penalty0 (3):\penalty0 183--227,
  2002.

\end{thebibliography}

\ifnum\OPTIONAppendix=1%
\clearpage

\appendix
\onecolumn

\section{Guide to the Twelf development}

This is the PDF part of the auxiliary material to the ICFP 2012 submission,
``Elaborating Intersection and Union Types''.
The rest of the auxiliary material is Twelf code, and is available on the web:

\medskip

\begin{tabular}[t]{ll}
\url{http://www.cs.queensu.ca/~jana/intcomp.tar} & tar archive
\\
\url{http://www.cs.queensu.ca/~jana/intcomp/} & browsable files
\end{tabular}

\medskip

We give an overview and briefly describe each file (mapping back to the paper).

\subsection{Overview}

All the lemmas and theorems in the paper were proved in Twelf (version 1.7.1).
The only caveat is that, to avoid the tedium of using nontrivial induction measures
(Twelf only knows about subterm ordering), we use the \texttt{\%trustme} directive
to define \texttt{pacify}, yielding a blatantly unsound induction measure;
see \TwelfFile{base.elf}.   All uses of this unsound measure can be found with
\begin{displ}
     \texttt{grep pacify *.elf}
\end{displ}
You can easily verify that in each case where \texttt{pacify} is used, the real inductive
object is smaller according to either the standard depth (maximum path length)
or weight (number of constructors, i.e.\ number of inference rules used) measures.

In any case, you will need to set the \texttt{unsafe} flag to permit the use of \texttt{\%trustme}
in the definition of \texttt{pacify}.

\subsection{Files}

\newcommand{\twelfheader}[1]{\item~{\TwelfFile{#1}}:~~~}

\begin{itemize}
\twelfheader{base.elf}
Generic definitions not specific to this paper.

\twelfheader{syntax.elf}
Source expressions \texttt{exp}, target terms \texttt{tm}, and types \texttt{ty},
covering much of Figures \ref{fig:source-syntax}, \ref{fig:target-syntax}, and \ref{fig:tytrans}.

\twelfheader{is-value.elf}%
Which source expressions are values (\Figureref{fig:source-syntax}).

\twelfheader{eval-contexts.elf}
Evaluation contexts (\Figureref{fig:source-syntax}).

\twelfheader{is-valuetm.elf}
Which target terms are values (\Figureref{fig:target-syntax}).

\twelfheader{typeof.elf}
A system of rules for a version of $\Gamma \entails e : A$ without subtyping.
This system is related to the one in \Figureref{fig:source-typing} by \Theoremref{thm:coerce} (\TwelfFile{coerce.elf}).

\twelfheader{typeof+sub.elf}
The rules for $\Gamma \entails e : A$ (\Figureref{fig:source-typing}).
Also defines subtyping $\texttt{sub\;A\;B\;Coe\;CoeTyping}$, corresponding to
$A \subtype B \elto{\texttt{Coe}}$.  In the Twelf development,
this judgment carries its own typing derivation (in the \TwelfFile{typeof.elf} system, without
subtyping) \texttt{CoeTyping}, which shows that the coercion $\texttt{Coe}$ is well-typed.

\item~{\TwelfFile{sub-refl.elf} and \TwelfFile{sub-trans.elf}}:~~~
Reflexivity and transitivity of subtyping. %

\twelfheader{coerce.elf}
\Theoremref{thm:coerce}: Given an expression well-typed in the system of \TwelfFile{typeof+sub.elf},
with full subsumption, coercions for function types can be inserted to yield an expression
well-typed in the system of \TwelfFile{typeof.elf}.  Getting rid of subsumption
makes the rest of the development easier.

\twelfheader{elab.elf} %
Elaboration rules deriving $\Gamma \entails e : A \elto{M}$ from \Figureref{fig:elaboration}.

\twelfheader{typeof-elab.elf}
Theorems \ref{thm:typeof-elab} and \ref{thm:elab-complete}.

\twelfheader{typeoftm.elf} %
The typing rules deriving $\Tgamma \entails M : T$ from \Figureref{fig:target-typing}.

\twelfheader{elab-type-soundness.elf}
\Theoremref{thm:elab-type-soundness}.

\twelfheader{step.elf}
Stepping rules $e \stepe e'$ (\Figureref{fig:source-opsem}).

\twelfheader{step-eval-context.elf} %
\Lemmaref{lem:step-eval-context} (stepping subexpressions in evaluation position).

\twelfheader{steptm.elf} %
Stepping rules $M \steptm M'$ (\Figureref{fig:target-opsem}).

\twelfheader{tm-safety.elf} %
Theorems \ref{thm:tm-safety} and \ref{thm:tm-deterministic} (target type safety and determinism).

\item~{\TwelfFile{elab-union.elf}, \TwelfFile{elab-sect.elf}, \TwelfFile{elab-arr.elf}} %
Inversion properties of elaboration for $\unty$, $\sectty$ and $\arr$ (Lemmas \ref{lem:elab-union}, \ref{lem:elab-sect}, and \ref{lem:elab-arr}).

\twelfheader{value-mono.elf} %
Value monotonicity of elaboration (Lemma \ref{lem:value-mono}).

\twelfheader{consistency.elf} %
The main consistency result (Theorem \ref{thm:consistency}) and its multi-step version (Theorem \ref{thm:consistency-star}).

\twelfheader{summary.elf} %
Theorems \ref{thm:summary-static} and \ref{thm:summary-dynamic}, which are corollaries of earlier theorems.

\end{itemize}

\fi
\end{document}